\documentclass[11pt]{article}
\usepackage{fullpage}
\usepackage{epsfig}
\usepackage{graphics}
\usepackage{latexsym}
\usepackage{amsmath}
\usepackage{amsfonts}
\usepackage{amssymb}
\usepackage{mathrsfs}
\usepackage{amsthm}
\usepackage{xspace}
\usepackage{epstopdf}
\usepackage{float}
\usepackage{hyperref}

\numberwithin{equation}{section}
\usepackage[ruled,vlined]{algorithm2e}
\SetArgSty{textrm}

\usepackage[english]{babel}
\usepackage[nottoc]{tocbibind}

\usepackage{caption}
\usepackage{subcaption}

\usepackage[usenames,dvipsnames]{color}
\usepackage{pifont}
\usepackage{booktabs}
\usepackage{multirow}

\newcommand{\redcomment}[1]{\textcolor{red}{\textrm{#1}}}

\usepackage{amsthm}     
\theoremstyle{plain}     
\newtheorem{theorem}{Theorem}

\newtheorem{lemma}{Lemma}
\newtheorem{proposition}{Proposition}
\theoremstyle{definition} 

\newtheorem{mechanism}{Mechanism}
\theoremstyle{remark} 


\makeatletter
\@addtoreset{equation}{section}
\def\section{\@startsection {section}{1}{\z@}{-3.5ex plus -1ex minus
 -.2ex}{2.3ex plus .2ex}{\large\bf}}
\makeatother

\def\bfm#1{\mbox{\boldmath$#1$}}

\def\0{\bfm 0}

\DeclareMathAlphabet{\mathpzc}{OT1}{pzc}{m}{it}

\marginparwidth 0pt \oddsidemargin 0pt \evensidemargin 0pt
\topmargin 30pt \textheight 21.0 truecm \textwidth 16.0 truecm

\newcounter{my}

\newcounter{my2}

\newcounter{my3}

\newcounter{my4}

\newcounter{my5}

\newcounter{my6}

\allowdisplaybreaks 



\begin{document}

\title{Strategyproof Facility Location with Prediction: Minimizing the Maximum Cost}


\date{}
\maketitle

\vspace{-3em}
\begin{center}

\author{Hau Chan$^{1}$\quad Jianan Lin$^{2}$\quad Chenhao Wang $^{3,4}$\\
${}$\\
$1$ University of Nebraska-Lincoln\\
$2$ Rensselaer Polytechnic Institute\\
$3$ Beijing Normal University-Zhuhai\\
$4$ Beijing Normal-Hong Kong Baptist University\\
\medskip
}

\end{center}

\begin{abstract}
We study the mechanism design problem of facility location on a metric space in the learning-augmented framework, where mechanisms have access to imperfect predictions of the optimal facility locations. Our objective is to design strategyproof (SP) mechanisms that truthfully elicit agents’ preferences over facility locations and, using the given prediction, select a facility location that approximately minimizes the maximum cost among all agents. In particular, we seek SP mechanisms whose approximation guarantees depend on the prediction error: they should achieve improved performance when the prediction is accurate (the property of \emph{consistency}) while still ensuring strong worst-case guarantees when the prediction is arbitrarily inaccurate (the property of \emph{robustness}).

On the real line, we characterize all deterministic SP mechanisms with consistency strictly better than 2 and bounded robustness for the maximum cost. We show that any such mechanism must coincide with the MinMaxP mechanism, which returns the prediction if it lies between the two extreme agent locations and otherwise returns the agent location closest to the prediction.
For any prediction error $\eta\ge 0$, we prove that MinMaxP achieves a $(1+\min(1, \eta))$-approximation and that no deterministic SP mechanism can obtain a better approximation ratio.
In addition, for two-dimensional spaces with the $\ell_p$ distance, we analyze the approximation guarantees of a deterministic mechanism that applies MinMaxP independently on each coordinate, as well as a randomized mechanism that selects between two deterministic mechanisms with carefully chosen probabilities.
We further extend these results to the $L_p$-norm social cost objective on the line metric and the maximum cost objective on the tree metric. 
Finally, we examine the group strategyproofness of the mechanisms.
\end{abstract}

\section{Introduction}\label{sec:intro}

The facility location problem is a central topic in operations research, economics, and computer science, with applications ranging from urban planning to server deployment \cite{hochbaum1982heuristics,DBLP:journals/scw/Miyagawa01,Frank2007}. 
In this problem, a typical aim is to determine the placement of one or more facilities to best serve a set of agents, based on the agents’ location preferences within a given metric space (e.g., a street or an area).

Over the past decades, the facility location problem has attracted extensive interest at the intersection of artificial intelligence and mechanism design \cite{chan2021mechanismsurvey}, primarily focusing on agents’ strategic behavior, where agents may benefit from misreporting their location preferences to manipulate facility locations. The main goal and challenge in this strategic version of the problem is therefore to design strategyproof (SP) mechanisms that truthfully elicit agents’ location preferences—so that agents have no incentive to misreport their locations—while determining facility locations that approximately optimize certain objectives. Pioneering works include that of Moulin \cite{moulin1980strategy} and Procaccia and Tennenholtz \cite{procaccia2013approximate}.

This strategic version of the facility location problem is the first well-studied case of the paradigm of approximate mechanism design without money \cite{procaccia2013approximate}. Since these initial studies, a significant body of research has focused on designing SP mechanisms to determine facility locations that (approximately) optimize the social cost and maximum cost objectives, which measure the total cost and the maximum cost over agents, respectively. However, due to standard theoretical assumptions and limitations, most classical work evaluates mechanisms by their worst-case performance, typically through the approximation ratio. This focus often leads to impossibility results that do not fully capture typical behavior in practice, where additional structure or partial information is often available \cite{roughgarden2021beyond}.

To overcome these practical limitations, a powerful framework of machine learning (ML) augmented algorithms has been introduced in broader algorithm design \cite{lykouris2021competitive}. This framework aims to surpass worst-case bounds by incorporating predictions (e.g., about optimal solutions or future inputs) as supplementary advice. The goal is to design algorithms that perform exceptionally well when predictions are accurate (achieving consistency), yet degrade gracefully and maintain a robust baseline performance even when predictions are poor (achieving robustness). This paradigm has proven successful in areas like online and streaming algorithms \cite{mitzenmacher2022algorithms}, offering a principled way to leverage often-available side information—such as historical or contextual data—to break classical impossibility results.

Recently, Agrawal et al. \cite{agrawal2022learning} and Xu and Lu \cite{DBLP:conf/ijcai/XuL22} proposed the study of \emph{learning-augmented} mechanism design to address these limitations of classical results. In the general learning-augmented framework, mechanism designers are provided with a prediction (e.g., produced by a machine learning method) about some aspect of an optimal solution (see the survey by Mitzenmacher and Vassilvitskii \cite{mitzenmacher2022algorithms}). For instance, in facility location, such predictions could estimate optimal facility locations or agent locations based on historical or environmental data. The works \cite{agrawal2022learning,DBLP:conf/ijcai/XuL22} leverage predictions to design SP mechanisms and demonstrate that SP mechanisms with access to predictions can achieve strictly better approximation guarantees when the prediction is accurate (i.e., \emph{consistency}), while matching classical approximation guarantees even when the prediction is arbitrarily inaccurate (i.e., \emph{robustness}).

\subsection{Our Contributions}

In this paper, we study learning-augmented mechanism design for facility location on the one-dimensional real line \(\mathbb{R}\), on trees, and in two-dimensional spaces equipped with the \(\ell_p\) metric (denoted by \(\ell_2^p\) space). Our goal is to leverage (possibly imperfect) predictions of optimal facility locations to design both deterministic and randomized SP mechanisms under the \emph{maximum cost objective}. Unless otherwise stated, all mechanisms discussed in this paper are assumed to be \emph{anonymous} (their outputs do not depend on the identities of the agents) and \emph{unanimous} (if all agents report the same location, the mechanism outputs that location). It is clear that any mechanism with bounded robustness must be unanimous, because when all the agents are in the same location, the optimal maximum cost is 0 and a mechanism with bounded robustness has to output this location. Note that unanimity excludes the mechanism that always returns the prediction location. Below, we summarize our main contributions.

\begin{itemize}

    \item For the metric space of the real line $\mathbb R$, we obtain the following results:

    \begin{itemize}

        \item We analyze the MinMaxP mechanism of Agrawal et al. \cite{agrawal2022learning}, which returns the prediction location if it lies between the two extreme agent locations, and otherwise returns the closest agent location to the prediction. For  the maximum cost
objective,  while MinMaxP achieves a $(1+\min(1, \eta))$-approximation as a function of the prediction error $\eta$ \cite{agrawal2022learning}, we show that this bound is tight: no deterministic SP mechanism can outperform this bound for any $\eta\ge 0$;  

        \item We completely characterize deterministic SP mechanisms with acceptable performance guarantees: for any $\epsilon>0$, MinMaxP is the unique deterministic SP mechanism achieving both $(2-\epsilon)$-consistency and bounded robustness;  

        \item For randomized SP mechanisms, we study a mechanism, proposed by Balkanski et al. \cite{balkanski2024randomized}, that randomly selects MinMaxP (with probability $1-q$) and a classical prediction-free SP mechanism (with probability $q$). We prove that it achieves an approximation ratio of $1+\frac{q}{2}+(1-q)\min(1,\eta)$, and additionally, we establish a lower bound showing that no randomized SP mechanism can beat a $\frac32$-approximation when $\eta\ge \frac{1}{2}$.

         \item In addition, we consider the $L_p$-norm social cost objective. We show the MinMaxP mechanism is 1-consistent and $\mathcal{O}(n^{1/p})$-robust for the $L_p$-norm social cost objective. When $1<p<+\infty$, MinMaxP is the unique deterministic SP mechanism with $1$-consistency and bounded robustness.
    \end{itemize}

       \item For tree metrics, where the distance between two points \(x, y\) is defined as the length of the unique path connecting them, we obtain the following results for the maximum cost:

    \begin{itemize}
        \item For deterministic SP mechanisms, we show that a variant of  MinMaxP achieves a tight $(1+\min(1,\eta))$-approximation, which implies 1-consistency and 2-robustness. 
    
        \item For randomized SP mechanisms, we show that a mixed mechanism—which runs MinMaxP with probability $1-q$ and the Tree Random mechanism of \cite{alon2010strategyproof} with probability $q$—achieves an approximation ratio of 
        $1+\frac{n}{n+2}q + (1-q)\min(1, \eta)$.
    \end{itemize}

    \item For $\ell_2^p$ spaces ($p\ge 2$), where the distance between $x,y\in\mathbb R^2$ is defined by $d_p(x, y) =(\sum_{i=1}^{2} |x_i - y_i|^p)^{1/p}$, we provide the following new bounds:

    \begin{itemize}
        \item For deterministic SP mechanisms, we analyze the Minimum Bounding Box mechanism, which runs MinMaxP independently on each coordinate, and we show that it has a $(1+\min(2^{1/p}, \eta))$-approximation.  On the negative side, while no deterministic SP mechanism is $(2-\epsilon)$-consistent and $(1+2^{1/p}-\epsilon)$-robust when $p=2$ \cite{agrawal2022learning}, extending this impossibility result to general $p\ge 2$ remains challenging. We conjecture, however, that the same lower bound applies for all $\ell_2^p$ spaces with $p\ge 2$.
    
        \item  For randomized SP mechanisms, we show that a mixed SP mechanism, which runs Minimum Bounding Box with probability $1-q$ and the coordinate-wise median with probability $q$, achieves an approximation ratio of $1+q + (1-q)\min(2^{1/p}, \eta)$. 
    \end{itemize}


\end{itemize}

We further investigate the group strategyproofness (GSP) and strong group strategyproofness (SGSP) of the mechanisms we study. On the real line and on trees, we show that the MinMaxP mechanism is SGSP, and that no SGSP mechanism can achieve a robustness guarantee better than 2. Thus, MinMaxP attains the optimal robustness guarantee among all SGSP mechanisms in these settings. In contrast, in \(\ell_2^p\) spaces, none of the mechanisms we consider are GSP, and for any prediction error \(\eta \ge 0\), no (possibly randomized) GSP mechanism can achieve an approximation ratio better than 2.

\paragraph{Organization.} In Section \ref{sec:ore}, we present the preliminaries of the considered mechanism design setting. 
In Sections \ref{sec:line}, \ref{sec:tree} and \ref{sec:2d}, we study the line, tree and $\ell_2^p$ spaces, respectively. 
In Section \ref{sec:ext} we discuss the group strategyproofness of mechanisms.

Compared to the preliminary conference version \cite{ecai2025} of this work, this full version additionally considers \(L_p\)-norm social cost objectives for the line and maximum cost objectives for tree metrics, which introduce further challenges beyond the classical objectives on the real line.

\subsection{Related Work}


\paragraph{Mechanism Design for Facility Location.}
The study of mechanism design for facility location has a rich and extensive history. Moulin \cite{moulin1980strategy} provided a seminal characterization of deterministic SP mechanisms for facility location on the real line. Such characterization includes the median mechanism, which places the facility at the median of the elicited agent locations. 
Moulin \cite{moulin1980strategy} further showed that all deterministic SP mechanisms on the line are general median mechanisms. 
Procaccia and Tennenholtz \cite{procaccia2013approximate} introduced the paradigm of approximate mechanism design without money, highlighting the trade-off between strategyproofness and approximation guarantees. 
For the social cost, the median mechanism is SP and optimal. 
They proved that the median mechanism provides a tight $2$-approximation for the maximum cost, and that a randomized mechanism, LRM, achieves a $1.5$-approximation, alongside tight lower bounds.  
In the two-dimensional Euclidean spaces, the coordinate-wise median (CM) mechanism provides $\sqrt2$-approximation for the social cost \cite{DBLP:conf/sagt/Meir19} and 2-approximation for the maximum cost  \cite{DBLP:journals/scw/GoelH23}, both of which are optimal among deterministic SP mechanisms. 
Other notable extensions include obnoxious facility location \cite{cheng2013strategy,Ibara2012characterize,ye2015strategy}, other metric spaces \cite{alon2010strategyproof,feldman2013strategyproof,lin2020nearly,gravin2025approximation,li2017mechanism}, settings with facility preferences \cite{fong2018facility,PaoloCarmine2016}, distance constraints \cite{chen2018mechanism,chen2021tight}, and alternative cost functions \cite{feldman2013strategyproof,fotakis2013strategyproof}. 
For a comprehensive survey, see Chan et al. \cite{chan2021mechanismsurvey}.

\paragraph{Learning-augmented Mechanism Design.} 
The learning-augmented, or algorithms-with-predictions paradigm \cite{mitzenmacher2022algorithms}, enriches traditional results by integrating machine-learned predictions into the algorithmic design. 
Most research in this area has concentrated on online optimization, but learning-augmented mechanism design has recently emerged through the initial works of Agrawal et al. \cite{agrawal2022learning} and Xu and Lu  \cite{DBLP:conf/ijcai/XuL22}. 
Focusing on mechanism design for facility location, Agrawal et al. introduced mechanisms that leverage predictions to balance between consistency  and robustness, providing approximation guarantees as a function of prediction error. 

Subsequent works have considered other settings of learning-augmented mechanism design for facility location.  Barak et al. \cite{DBLP:conf/nips/Barak0T24} considered prediction models where most agents' locations are well-predicted but with a few arbitrary outliers. 
Chen et al. \cite{DBLP:journals/corr/abs-2410-07497} studied the price of anarchy in the game induced by a mechanism called \textit{Harmonic} in general metric spaces. \cite{nips2025envy} discussed the learning-augmented mechanisms for envy ratio.
The variant of obnoxious facility location—where agents prefer the facility as far away as possible—has been investigated by Istrate and Bonchis \cite{DBLP:journals/corr/abs-2212-09521} and Fang et al. \cite{DBLP:conf/tamc/FangFLN24} under learning-augmented mechanism design. 
Christodoulou et al. \cite{DBLP:conf/nips/000SV24} introduced the notion of \emph{quality of recommendation} as an alternative performance metric. Balkanski et al. \cite{balkanski2024randomized} explored randomized mechanisms and examined how different types of predictions impact robustness and consistency, particularly for the maximum cost.

\medskip
Compared to existing work on learning-augmented mechanism design for facility location, to the best of our knowledge, our paper is the first to provide a complete characterization of SP mechanisms with acceptable performance guarantees, and the first to consider \(\ell_2^p\) spaces. Moreover, whereas most prior studies focus only on individual strategyproofness, we offer a deeper analysis of (strong) group strategyproofness.

\section{Preliminaries}\label{sec:ore}

In the considered mechanism design setting, we have a set $N=\{1,2,\ldots,n\}$ of $n$ agents on a metric space $(\mathcal M,d)$ with agent location (preference) profile $\mathbf{x} = (x_1, x_2, \ldots, x_n)\in\mathcal M^n$, where $d$ is the distance function such that the distance between two points $x, y\in\mathcal M$ is  $d(x, y)$. 
Given the facility location $y\in\mathcal M$, the cost of each agent $i\in N$ is the distance from their location to the facility, i.e., $c(x_i, y) = d(x_i, y)$. 
We want to minimize the maximum cost of all agents:
\begin{align*}
    \text{MC}(\mathbf{x}, y) &= \max_{i\in N}c(x_i, y).
\end{align*}

In the setting of  mechanism design with predictions, we are given a prediction $\pi\in\mathcal M$ on the optimal facility location. 
A deterministic mechanism is a function $f$ : $\mathcal M^n \times \mathcal M\rightarrow \mathcal M$ that maps the location profile $\mathbf{x}$ of agents and the prediction $\pi$ to a facility location $f(\mathbf{x}, \pi) = y$. 
A randomized mechanism is a function $f$ : $\mathcal M^n \times \mathcal M\rightarrow \Delta(\mathcal M)$ that maps the location profile and the prediction to a probability distribution, where $\Delta(\mathcal M)$ denotes the set of all distributions over $\mathcal M$. 
Given a distribution $f(\mathbf{x}, \pi)\in \Delta(\mathcal M)$, the (expected) cost of agent $i$ is $c(x_i, y)=\mathbb{E}_{y\sim f(\mathbf{x, \pi})}[d(x_i, y)]$.

A mechanism is \emph{strategyproof} (SP) if no single agent can
benefit from misreporting their location, regardless of the reported locations of the other agents.
Formally, a mechanism $f$ is SP if for any location profile $\mathbf x$, prediction $\pi$, agent $i\in N$ and $x_i'\in\mathcal M$, it holds that $c(x_i, f(x_i', \mathbf{x}_{-i}, \pi)) \ge c(x_i, f(\mathbf{x}, \pi))$. 
A mechanism is \textit{unanimous} if when all agents have the same location, it outputs this location. Formally, a unanimous mechanism guarantees that $f(\mathbf{x}, \pi)=c$ whenever $\forall i\in N, x_i=c$. 

For any location profile $\mathbf x$, let $o(\mathbf x)$ be an optimal solution that minimizes the maximum cost. 
We measure the performance or guarantee of a mechanism by its consistency and robustness.
A mechanism $f$ is $\alpha$-\textit{consistent} if it has an $\alpha$-approximation when the prediction is correct (i.e., $\pi=o(\mathbf x)$):
$$
\max_{\mathbf x}\frac{\text{MC}(\mathbf x, f(\mathbf x, o(\mathbf x)))}{\text{MC}(\mathbf x, o(\mathbf x))} \le \alpha.
$$
On the other hand, $f$ is $\beta$-\textit{robust} if it has a $\beta$-approximation when the predictions can be arbitrarily wrong:
$$
\max_{\mathbf x, \pi}\frac{\text{MC}(\mathbf x, f(\mathbf x, \pi))}{\text{MC}(\mathbf x, o(\mathbf x))} \le \beta.
$$

Furthermore, there is a more general approximation guarantee for mechanisms
as a function of the \emph{prediction error}.
Following the standard definition \cite{agrawal2022learning}, let the error be the distance between the prediction location and an optimal location, divided by the optimal maximum cost:
$$
\eta(\pi, \mathbf{x}) = \frac{d(o(\mathbf{x}),\pi)}{\text{MC}(\mathbf{x},o(\mathbf{x}))}.
$$
Given a bound $\eta$ on the prediction error, a mechanism achieves a $\gamma(\eta)$-approximation if
$$
\max_{\pi, \mathbf{x}: \eta(\pi, \mathbf{x})\le \eta}\frac{\text{MC}(\mathbf x, f(\mathbf x, \pi))}{\text{MC}(\mathbf x, \pi)} \le \gamma(\eta).
$$
When $\eta=0$, this worst-case guarantee is the consistency. 
The worst-case guarantee, irrespective of the error (i.e., over all values of $\eta$), is the robustness.

\section{The Real Line}\label{sec:line}
In this section, we consider the mechanism design problem on the real line $\mathbb R$, i.e., the space is $\mathcal M=\mathbb R$.  
For convenience, assume that $x_1\le \cdots\le x_n$. 
In Section \ref{sec:det} and Section \ref{sec:rand}, we study deterministic SP mechanisms and randomized SP mechanisms, respectively. 

\subsection{Deterministic Mechanisms}\label{sec:det}

We consider the following SP mechanism (called MinMaxP) proposed by Agrawal et al. \cite{agrawal2022learning}. 
They show that MinMaxP has $(1+\min(1, \eta))$-approximation with respect to prediction error $\eta$ for the maximum cost, and  thus it is $1$-consistent and $2$-robust.

\begin{mechanism}[MinMaxP \cite{agrawal2022learning}]\label{mec:line}
 Given location profile $\mathbf{x}$ and prediction $\pi$, return 
 $$y = \max(x_1, \min(x_n, \pi)).$$
\end{mechanism}

\paragraph{Tight lower bound.} It is well known that in the prediction-free setting \cite{procaccia2013approximate}, no deterministic SP mechanism has an approximation ratio less than 2. Because the robustness is defined by how the mechanism performs when predictions are arbitrarily bad, it indicates that the robustness of any SP mechanism with prediction cannot be better than $2$. Since 1-consistency is trivially optimal,  MinMaxP is the best possible for both consistency and robustness. 

While the work of \cite{agrawal2022learning} does not provide a lower bound on the approximation ratios with respect to the error bound $\eta$, we complete their results by giving a matching lower bound. 
That is, we show that the $(1+\min(1, \eta))$-approximation of MinMaxP is best possible for any $\eta\ge 0$. 

Before proceeding, we present several properties of SP mechanisms that hold regardless of whether a prediction is present. The first property, due to Border and Jordan \cite{border1983}, states that any SP mechanism must be "uncompromising".  Formally, a mechanism $f$ is said to be \emph{uncompromising} if for every location profile $\mathbf x$ and each agent $i\in N$, if $f(\mathbf x)=y$, then 
\begin{align*}
x_i > y &\implies f(x_i', \mathbf{x}_{-i}) = y &&\text{for all } x_i' \geq y, ~\text{and}~ \\
x_i < y &\implies f(x_i', \mathbf{x}_{-i}) = y &&\text{for all } x_i' \leq y.
\end{align*}

\begin{lemma}[\cite{border1983}]\label{lemma:sp}
Any deterministic SP mechanism  must be uncompromising. 
\end{lemma}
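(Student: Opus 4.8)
The plan is to reduce uncompromisingness to two one-agent deviation inequalities, dispatch the easy (``toward'') direction with them, and then use unanimity to close the hard (``away'') direction. Fix the prediction (it plays no role here and I suppress it) together with the other agents' reports $\mathbf x_{-i}$, and abbreviate the induced one-dimensional outcome function by $g(t):=f(t,\mathbf x_{-i})$. Suppose $g(x_i)=y$; by the left--right symmetry of the two cases in the definition it suffices to assume $x_i>y$ and prove $g(x_i')=y$ for all $x_i'\ge y$. Writing $y':=g(x_i')$, strategyproofness at the true type $x_i$ (against the report $x_i'$) and at the true type $x_i'$ (against the report $x_i$) yields
\[
|x_i-y|\le|x_i-y'|\qquad\text{and}\qquad|x_i'-y'|\le|x_i'-y|.
\]

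First I would extract the consequences that follow from such inequalities alone. Applying them to an arbitrary pair of reports $s<s'$ shows that $g$ is non-decreasing: if its value dropped, the first agent would be closer to the larger outcome and the second to the smaller one, forcing $s\ge s'$. This monotonicity, together with the second displayed inequality, already settles the toward direction: for $y\le x_i'\le x_i$ the inequality $|x_i'-y'|\le|x_i'-y|=x_i'-y$ gives $y'\ge y$, while monotonicity gives $y'=g(x_i')\le g(x_i)=y$, hence $y'=y$.

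The remaining case $x_i'>x_i$ is where the real work lies. Monotonicity gives $y'\ge y$, so everything reduces to the matching upper bound $y'\le y$; but the two deviation inequalities are perfectly consistent with an upward jump $y'>x_i>y$ (one can exhibit a strategyproof, non-unanimous mechanism whose outcome jumps away from a stationary agent), so this bound cannot come from one-agent strategyproofness and must use unanimity. Here I would first establish the ``voter'' property $\min_j x_j\le f(\mathbf x)\le\max_j x_j$: starting from the all-equal profile, where unanimity pins the outcome, lower each coordinate to its target value and invoke monotonicity to see the outcome can only decrease, and argue symmetrically for the lower bound. I would then anchor the outcome at $y$ by using the already-proved toward direction to slide every agent other than $i$ to $y$ (those above $y$ downward, those below upward), each single move preserving the outcome, reaching a profile in which agent $i$ sits alone to the right of a block of agents located at $y$.

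The crux, and the step I expect to be the main obstacle, is to show that raising agent $i$'s report further cannot lift the outcome off this anchor: an upward jump would place the outcome strictly to the right of every other agent, and ruling this out rigorously is essentially the generalized-median content of Moulin's characterization. I would complete it by combining the voter property with monotonicity in the coordinate of a left-anchoring agent (one exists since $y\ge\min_j x_j$ while $y<x_i$, so some $x_k\le y$ with $k\neq i$) to contradict the supposed jump, thereby forcing $y'=y$ and establishing the claim.
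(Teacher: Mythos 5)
The paper does not actually prove this lemma --- it imports it from Border and Jordan \cite{border1983} --- so your attempt is measured against a self-contained proof of that result. Your first three steps are sound: monotonicity of the one-agent outcome function $g$ follows from the two SP inequalities, the ``toward'' direction then follows, and the voter property $\min_j x_j\le f(\mathbf x)\le\max_j x_j$ does follow from monotonicity plus unanimity (this reproves the paper's Lemma \ref{lemma:line-5}). You are also right that unanimity is indispensable for the ``away'' direction. But the crux step --- exactly the one you flag --- has a genuine gap, in two respects. First, the contradiction you describe does not materialize. After anchoring the other agents at $y$, suppose agent $i$ reports $x_i'>x_i$ and the outcome jumps to $y'>y$; SP against the report $x_i$ forces $y'\ge 2x_i-y$, but this is perfectly compatible with your voter property, since $y'$ may lie anywhere in $[2x_i-y,\,x_i']$, i.e.\ below agent $i$'s \emph{own} report. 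Being strictly to the right of every \emph{other} agent violates nothing (the $\max$ mechanism, which is SP, unanimous and uncompromising, does this routinely), and monotonicity in a left-anchored agent's coordinate is irrelevant because the two profiles being compared differ only in agent $i$'s coordinate. What actually closes this case is a step you never mention: take intermediate reports $t\in(x_i,2x_i-y)$, where SP forces $g(t)=y$ or $g(t)\ge 2x_i-y>t$, the latter now genuinely violating the voter property on the anchored slice; this gives $g(t)=y$ on $[x_i,2x_i-y)$, and one must iterate this doubling argument to exhaust $[x_i,\infty)$.

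Second, and more fundamentally, even with that repair you have only shown $f(x_i',y,\dots,y)=y$ for the \emph{anchored} profile, whereas uncompromisingness demands $f(x_i',\mathbf x_{-i})=y$ for the original $\mathbf x_{-i}$. The anchoring is a one-way street: undoing it requires moving an agent who sits exactly at the outcome away from it without changing the outcome, and that inference is false for SP unanimous mechanisms in general --- for the two-agent mechanism $f(x_1,x_2)=\min(x_1,x_2)$ one has $f(x_i',y)=y$, yet moving the second agent from $y$ to any $z\in(y,x_i')$ moves the outcome to $z$. Working at the original profile instead does not help: the voter property cannot rule out the jump whenever some other agent sits at or above $2x_i-y$, a configuration that is possible as soon as $n\ge 3$, and handling it requires an argument interleaving several agents' slices. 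That interleaving is precisely the hard content of the Border--Jordan/Moulin characterization, which is presumably why the paper cites \cite{border1983} rather than reproving it; your proof is incomplete exactly where you predicted the obstacle would be.
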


\begin{lemma}\label{lemma:line-5}
  For any location profile $\mathbf x$,  any deterministic SP mechanism  always outputs $y\in [x_1, x_n]$.
\end{lemma}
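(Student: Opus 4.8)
The plan is to argue by contradiction, using the uncompromising property (Lemma \ref{lemma:sp}) together with unanimity. Fix the prediction $\pi$ and regard $f(\cdot,\pi)$ as a deterministic SP mechanism on location profiles; write $y=f(\mathbf x,\pi)$ and suppose toward a contradiction that $y\notin[x_1,x_n]$. By symmetry it suffices to treat the case $y>x_n$, since the case $y<x_1$ is handled identically with all inequalities reversed. Since $x_i\le x_n<y$ for every agent $i$, every reported location lies strictly below $y$.

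The key step is to transform $\mathbf x$ into a unanimous profile one agent at a time without ever changing the output. Fix a common target $c$ with $c<y$, say $c=x_n$. Beginning from $\mathbf x^{(0)}=\mathbf x$, whose output is $y$, agent $1$'s location satisfies $x_1<y$, so uncompromising guarantees that relocating agent $1$ to any point $\le y$ --- in particular to $c$ --- leaves the output equal to $y$. Iterating, let $\mathbf x^{(k)}$ be the profile whose first $k$ coordinates equal $c$ and whose remaining coordinates are $x_{k+1},\dots,x_n$; by induction $\mathbf x^{(k)}$ still outputs $y$, and the next agent's location $x_{k+1}<y$ lets us move it to $c$ while preserving the output. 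After $n$ steps we reach the all-$c$ profile with $f(c,\dots,c,\pi)=y$.

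This contradicts unanimity, which forces $f(c,\dots,c,\pi)=c\neq y$; hence $y\in[x_1,x_n]$.

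The argument becomes routine once Lemma \ref{lemma:sp} is in hand, so I do not anticipate a serious obstacle. The one point demanding care is the inductive bookkeeping: at each relocation one must verify that the agent being moved still lies strictly on the same side of the (unchanged) output $y$, and that the target $c$ lies weakly on the correct side, so that the hypothesis of the uncompromising property genuinely applies. Choosing $c$ strictly between the extreme agent and $y$ (here $c<y$) keeps all of these inequalities intact, ruling out the degenerate situation in which an agent's location would coincide with $y$ and the uncompromising condition would not be invokable.
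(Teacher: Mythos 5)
Your proof is correct and follows essentially the same route as the paper's: assume the output falls outside $[x_1,x_n]$, use the uncompromising property (Lemma \ref{lemma:sp}) to move every agent one by one to a single common point without changing the output, and then contradict unanimity. The only cosmetic differences are your choice of the symmetric case ($y>x_n$ rather than $y<x_1$) and your more explicit inductive bookkeeping, which the paper leaves implicit.
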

\begin{proof}
    Suppose there exists a mechanism that outputs $y<x_1$ or $y>x_n$ for some instance. W.l.o.g. assume that $y<x_1$. Then we move all the agents to $x_1$ one by one, and according to Lemma \ref{lemma:sp}, the output $y$ does not change. This contradicts the unanimity.
\end{proof}

Now we are ready to present the lower bound result on all deterministic SP mechanisms. 

\begin{theorem}\label{thm:lowl}
    For any prediction error bound $\eta\ge 0$, no deterministic SP mechanism can achieve an approximation ratio better than $1+\min(1, \eta)$.
\end{theorem}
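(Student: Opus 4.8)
The plan is to show that for an arbitrary deterministic SP mechanism $f$ and any fixed $\eta\ge 0$, one can exhibit a single instance (which I take to have two agents) whose prediction error is at most $\eta$ and whose approximation ratio is at least $1+\min(1,\eta)$; since a lower bound only requires one bad instance, this suffices. First I would fix an arbitrary prediction value $\pi$ and observe that $f(\cdot,\pi)$ is a classical deterministic SP mechanism on profiles, hence uncompromising (Lemma~\ref{lemma:sp}) and range-bounded (Lemma~\ref{lemma:line-5}). On two-agent profiles these properties pin down the mechanism to the median-with-phantom form $f((x_1,x_2),\pi)=\med(x_1,x_2,p)$ for some phantom $p=p(\pi)\in\mathbb R\cup\{\pm\infty\}$, where the two extreme positions and the phantom determine the output. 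The phantom $p$ is whatever the mechanism does, and it is the only information about $f$ I will use.

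Next I would build the instance around $p$. Assume $p\ge\pi$ (the case $p\le\pi$ is symmetric, placing the midpoint to the right of $\pi$). For a scale parameter $R>0$, take the two-agent profile with midpoint $m=\pi-\eta R$ and radius $R$, i.e. $x_1=\pi-(1+\eta)R$ and $x_2=\pi-(\eta-1)R=\pi+(1-\eta)R$. Then the optimum is at $o(\mathbf x)=m$ with $\text{MC}(\mathbf x,o(\mathbf x))=R$, and the error is exactly $|m-\pi|/R=\eta$, so the error constraint holds with equality. Writing $y=\med(x_1,x_2,p)$ for the output, the ratio equals $1+|y-m|/R$, since placing the facility at $y$ makes the far agent's cost $R+|y-m|$.

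I would then split into two regimes. When $\eta\ge 1$ we have $x_2=\pi-(\eta-1)R\le\pi\le p$, so the phantom lies at or beyond the right endpoint, giving $y=x_2$ and ratio $2=1+\min(1,\eta)$. When $\eta<1$ I choose $R$ large enough that $p\le x_2=\pi+(1-\eta)R$; then $p\in[x_1,x_2]$, so $y=p$ and the ratio is $1+(p-m)/R=1+\eta+(p-\pi)/R\ge 1+\eta=1+\min(1,\eta)$. The degenerate cases $p=\pm\infty$ are even easier, as the output is then always an extreme agent, so an error-free symmetric profile already yields ratio $2$. This also confirms tightness: MinMaxP has phantom $p=\pi$, and the construction then gives ratio exactly $1+\eta$.

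The step I expect to be the main obstacle is reconciling the error bound with the ratio bound. Collapsing the output onto an extreme agent (ratio $2$) via uncompromisingness is straightforward, but for $\eta<1$ any such collapse forces the prediction far from the new optimum, pushing the error above $\eta$. The fix is the deliberate placement above: by shifting the midpoint a distance exactly $\eta R$ to the side of $\pi$ opposite the phantom and letting $R$ grow, the optimum stays within $\eta\cdot\text{MC}(\mathbf x,o(\mathbf x))$ of $\pi$ while the output—pinned at the phantom—sits a $(1+\eta)$-fraction of the optimum away from the far agent. A secondary point to handle carefully is justifying the median-with-phantom reduction from the stated lemmas (or citing Moulin's characterization), and verifying $x_1<x_2$ and the endpoint comparisons in each regime.
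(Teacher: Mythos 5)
Your core construction is sound and takes a genuinely different route from the paper: the paper never characterizes the mechanism, but instead assumes a ratio of $1+\min(1,\eta)-\delta$ and uses the uncompromising property (Lemma~\ref{lemma:sp}) to drag agents toward infinity while preserving the prediction error, reaching a contradiction in three cases. You instead reduce the mechanism to an explicit functional form and then exhibit, for each mechanism, a single bad instance whose midpoint sits at distance exactly $\eta R$ from $\pi$ on the side opposite the phantom; the error accounting and the two regimes ($\eta\ge 1$ forcing the output to the near extreme, $\eta<1$ pinning the output at the phantom once $R$ is large, plus the $p=\pm\infty$ degenerate cases) are all correct. However, there are two genuine gaps in the write-up as it stands.

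First, the reduction to $f((x_1,x_2),\pi)=\med(x_1,x_2,p(\pi))$ does \emph{not} follow from Lemmas~\ref{lemma:sp} and~\ref{lemma:line-5}, which is the justification you lean on. The dictatorship $f(\mathbf x,\pi)=x_1$ is SP, unanimous, uncompromising, and always outputs a point in $[x_1,x_n]$, yet it equals neither $\med(x_1,x_2,p)$ for any fixed $p$ nor any other single-phantom rule. The missing ingredient is anonymity: you need Moulin's characterization of anonymous, SP, Pareto-efficient rules as medians with fixed phantoms (efficiency here is exactly Lemma~\ref{lemma:line-5}). Anonymity is available only because the paper assumes it globally for all mechanisms; note that the paper's own proof never uses it, so the paper's argument is strictly more general on this point.

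Second, your instances have two agents, so as written you have only ruled out mechanisms defined on two-agent profiles, whereas the theorem quantifies over mechanisms for every number of agents $n$ (and the paper's agent-moving argument handles arbitrary $n$). One cannot simply collapse $n$ agents onto two points, since a coordinated move of collocated agents is not a single-agent deviation. The repair within your framework is to invoke Moulin's full characterization $f(\cdot,\pi)=\med(x_1,\ldots,x_n,p_1,\ldots,p_{n-1})$ and restrict attention to profiles with one agent at $s$ and $n-1$ agents at $t\ge s$; a short order-statistics computation shows the output on such profiles is exactly $\med(s,t,p_{n-1})$, so your construction goes through verbatim with the effective phantom $q=p_{n-1}(\pi)$, keeping the singleton as the left point throughout. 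With these two repairs (cite Moulin rather than the stated lemmas, and run the argument with the effective phantom for general $n$), your proof is complete and constitutes a valid, more constructive alternative to the paper's contradiction argument.
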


\begin{proof}
   Suppose $f$ is such a mechanism with approximation ratio $1+\min(1, \eta)-\delta$ with respect to prediction error bound $\eta\ge 0$, where $\delta>0$ is a constant. We discuss three cases, where the case when $\eta=0$ is trivial. 

       \textbf{Case 1}: $\eta \in(0, 1)$. By the definition of $\eta$, the prediction $\pi$ lies in interval $(x_1,x_n)$.  Assume without loss of generality that $\frac{x_1+x_n}{2}\le \pi < x_n$. By the approximation ratio $f$ and the definition of $\eta$ we know that $y\in (x_1+x_n-\pi, \pi)$. Then we move $x_n$ to $x_n'=x_n+C_1$ where $C_1\rightarrow\infty$ is large enough. According to Lemma \ref{lemma:sp}, we know $y$ does not change. In the same way, we move $x_1$ to $x_1'=x_1-C_2$ where $C_2=C_1\cdot \frac{\pi-x_1}{x_n-\pi}>C_1$ is also large enough, and $y$ still does not change. In the new instance, the prediction error remains the same because
    \begin{align*}
        \frac{|\pi-\frac{x_1'+x_n'}{2}|}{x_n'-x_1'} = \frac{\pi-\frac{x_1+x_n}{2}+\frac{C_2-C_1}{2}}{x_n-x_1+C_1+C_2}=\frac{\frac{C_2-C_1}{2}}{C_1+C_2}=\frac{\pi-\frac{x_1+x_n}{2}}{x_n-x_1},
    \end{align*}
    where the second equation is due to
    \begin{align*}
        \frac{C_2-C_1}{C_2+C_1}=\frac{\frac{C_2}{C_1}-1}{\frac{C_2}{C_1}+1}=\frac{\frac{\pi-x_1}{x_n-\pi}-1}{\frac{\pi-x_1}{x_n-\pi}+1}=\frac{2\pi-(x_1+x_n)}{x_n-x_1}.
    \end{align*}
    The approximation ratio on the new instance is 
    \begin{align*}
        \frac{MC(\mathbf x',y)}{o(\mathbf x')}&\ge \frac{y-x_1'}{\frac{x_n'-x_1'}{2}}> \frac{x_1+x_n-\pi-x_1'}{\frac{x_n'-x_1'}{2}} = \frac{2(x_n-\pi+C_2)}{x_n-x_1+C_1+C_2},
    \end{align*}
    which approaches $\frac{2C_2}{C_1+C_2}=1+\eta$ when $C_1,C_2$ go to infinity and the prediction $\pi$ has error $\eta$. It derives a contradiction to the approximation ratio $1+\eta-\delta$.

    \textbf{Case 2}: $\eta = 1$. Assume w.l.o.g. that the prediction is $\pi=x_n$, with an error equal to 1. To have an approximation ratio less than 2, it must be $y\in (x_1, x_n)$. We move all agents with $x_i < y$  to $y$ one by one. According to Lemma \ref{lemma:sp}, $y$ does not change, and the error retains 1 in the new instance. It is easy to see that the ratio of the new instance is 2, a contradiction to the approximation ratio $2-\delta$. 

    \textbf{Case 3}: $\eta > 1$.  Assume w.l.o.g. that the prediction is $\pi>x_n$ with an error $\eta$. Obviously $y\in (x_1, x_n)$. We move all agents with $x_i < y$  to $y$ one by one. According to Lemma \ref{lemma:sp}, $y$ does not change. We know $x_1'=y$. Then we move $x_n$ to $x_n'$ so that $\frac{\pi-x_n'}{\pi-x_1'}=\frac{\pi-x_n}{\pi-x_1}$. It is easy to see that $x_n<x_n'<\pi$, and the error $\eta$ does not change in the new instance. By Lemma \ref{lemma:sp}, $y$ does not change.  However, the ratio of the new instance is 2, a contradiction to the approximation ratio $2-\delta$. 
\end{proof}

\paragraph{Characterizations.} Next, we provide a characterization of all deterministic SP mechanisms with acceptable performance guarantees, that is, they have to be the MinMaxP mechanism. The following three lemmas says that such a mechanism must output the prediction as the facility location, whenever the prediction is correct, is between the two extreme agent locations, or is one of the extreme agent locations, respectively. 

\begin{lemma}\label{lemma:line-1}
    If the prediction is correct (i.e., $\pi=\frac{x_1 + x_n}{2}$), then for any $\epsilon\in (0, 1]$, any deterministic SP $(2-\epsilon)$-consistent mechanism must output $y=\pi$.
\end{lemma}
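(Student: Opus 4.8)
The plan is to argue by contradiction: suppose some deterministic SP $(2-\epsilon)$-consistent mechanism $f$ outputs $y = f(\mathbf{x}, \pi) \neq \pi$ on an instance $\mathbf{x}$ with correct prediction $\pi = \frac{x_1+x_n}{2}$. The degenerate case $x_1 = x_n$ forces $y = \pi$ by unanimity, so I assume $x_1 < x_n$ and, without loss of generality, $y > \pi$; write $d = y - \pi > 0$. First I would pin down the location of $y$: applying $(2-\epsilon)$-consistency to $\mathbf{x}$ itself (where the optimal cost is $\frac{x_n - x_1}{2}$ and, by Lemma~\ref{lemma:line-5}, the cost at $y$ is $\max(y - x_1, x_n - y)$) yields $\max(y - x_1, x_n - y) \le (1 - \tfrac{\epsilon}{2})(x_n - x_1) < x_n - x_1$, hence $x_1 < y < x_n$. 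Thus $x_1$ lies strictly left of $y$ and $x_n$ strictly right of $y$.

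The key device is to fix the prediction at $\pi$ and regard $g(\cdot) := f(\cdot, \pi)$ as an ordinary SP mechanism, so that Lemma~\ref{lemma:sp} applies and $g$ is uncompromising. I would then transform $\mathbf{x}$ into a two-point profile while freezing the output at $y$. Concretely, move every agent strictly left of $y$ to the point $\pi - d$ and every agent strictly right of $y$ to the point $y$, relocating agents one at a time. Each such move keeps the relocated agent on its original side of $y$ (a left agent goes to $\pi - d < y$, a right agent goes to $y$), so uncompromising guarantees the output stays $y$ after every individual step; by induction the output on the final profile $\mathbf{x}'$ is still $y$. Because $x_1$ is a left agent and $x_n$ a right agent, after the transformation all agents sit at $\pi - d$ or at $y = \pi + d$, so $x_1' = \pi - d$ and $x_n' = \pi + d$.

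It then remains to read off the contradiction. The midpoint of $\mathbf{x}'$ is $\frac{(\pi - d) + (\pi + d)}{2} = \pi$, so $\pi = o(\mathbf{x}')$ is still the correct prediction, and the optimal maximum cost on $\mathbf{x}'$ is $d$. The mechanism's output $y = \pi + d$ incurs cost $2d$ on the agents at $\pi - d$, so $\mathrm{MC}(\mathbf{x}', y) = 2d$ and the approximation ratio equals $2 > 2 - \epsilon$, contradicting $(2-\epsilon)$-consistency. The case $y < \pi$ is symmetric: move every left agent to $y$ and every right agent to $\pi + (\pi - y)$, again producing a two-point profile with optimum $\pi$ and ratio $2$.

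I expect the main obstacle to be the bookkeeping that keeps the output frozen at $y$: uncompromising is a single-agent statement, so I must relocate agents one at a time and verify at each step that the agent being moved is still on the correct side of $y$, relying on the non-strict inequalities in the definition (a right agent is permitted to land exactly on $y$). The second delicate point is that relocating only the extreme agents $x_1, x_n$ does not suffice; all intermediate agents must be consolidated onto $\{\pi - d, y\}$, since otherwise the new endpoints—and hence the new optimum—could drift away from $\pi$, leaving the prediction incorrect and the consistency bound inapplicable. Consolidating every agent is exactly what turns $\mathbf{x}'$ into a clean two-point instance whose optimum coincides with $\pi$, which is what makes the $(2-\epsilon)$-consistency bound bite.
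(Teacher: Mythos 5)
Your proof is correct and follows essentially the same route as the paper's: fix the prediction, invoke the uncompromising property (Lemma~\ref{lemma:sp}) to relocate agents one at a time into a profile symmetric about $\pi$ while freezing the output at $y$, and then read off the ratio-$2$ contradiction with $(2-\epsilon)$-consistency. The only cosmetic differences are that you consolidate \emph{all} agents onto the two points $\pi-d$ and $y$ (the paper leaves the middle agents in place, which yields the same endpoints $2\pi-y$ and $y$) and that you deduce $x_1<y<x_n$ from the consistency bound rather than from Lemma~\ref{lemma:line-5} alone.
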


\begin{proof}
    Suppose  $y> \pi$ for some instance with correct prediction $\pi=\frac{x_1 + x_n}{2}$. By Lemma \ref{lemma:line-5}, we know $y\in (\pi, x_n]$. Then we move those agents with location $x_i>y$ to $y$ one by one. By Lemma \ref{lemma:sp}, $y$ does not change. Next, we move those agents with location $x_i<2\pi-y$ to $2\pi-y$ one by one, and do not change the output $y$. While in this new instance $\mathbf x'$ the prediction is still correct, i.e., $\pi=\frac{x_1'+x_n'}{2}$, the ratio induced by $y$ is 2, a contradiction to the $(2-\epsilon)$-consistency. 
\end{proof}

\begin{lemma}\label{lemma:line-2}
    If $x_1 < \pi < x_n$, then for any $\epsilon\in (0, 1]$, any deterministic SP $(2-\epsilon)$-consistent mechanism must output $y=\pi$.
\end{lemma}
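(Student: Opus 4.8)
The plan is to reduce this statement to Lemma~\ref{lemma:line-1}, the correct-prediction case, by exploiting the uncompromising property supplied by Lemma~\ref{lemma:sp}. Suppose, for contradiction, that the mechanism outputs some $y\neq\pi$ on an instance with $x_1<\pi<x_n$; by symmetry I would treat the case $y>\pi$. Lemma~\ref{lemma:line-5} gives $y\in[x_1,x_n]$, so together with $y>\pi>x_1$ I obtain $y\in(\pi,x_n]$. The goal is then to reshape this instance, without ever disturbing the output $y$, into one whose prediction is correct, and invoke Lemma~\ref{lemma:line-1} for the contradiction.

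The core construction is to move every agent located strictly to the left of $y$ onto the point $2\pi-y$, and every agent located strictly to the right of $y$ onto $y$, performing these relocations one agent at a time. Both kinds of moves are legitimate applications of the uncompromising property: a left agent ($x_i<y$) is sent to $2\pi-y$, which satisfies $2\pi-y<y$ and hence lies on the closed left side of $y$, while a right agent ($x_i>y$) is sent to $y$ itself. Thus by Lemma~\ref{lemma:sp} the output remains equal to $y$ at every intermediate profile, and in particular on the final profile $\mathbf x'$.

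Since $x_1<\pi<y$ there is always at least one agent relocated to $2\pi-y$, and since $x_n\ge y$ there is always at least one agent ending at $y$; consequently $\mathbf x'$ consists only of agents at $2\pi-y$ and at $y$, with minimum $2\pi-y$ and maximum $y$. The midpoint of these extremes is $\frac{(2\pi-y)+y}{2}=\pi$, so the prediction on $\mathbf x'$ is correct, while the mechanism still outputs $y\neq\pi$. Because the mechanism is SP and $(2-\epsilon)$-consistent throughout, this contradicts Lemma~\ref{lemma:line-1}. The mirror-image construction (sending right agents to $2\pi-y$ and left agents to $y$) disposes of the case $y<\pi$.

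The step I expect to be most delicate is confirming that the sequence of relocations truly keeps the output pinned at $y$, and in particular handling the boundary agent(s) sitting exactly at $y$ (this includes the subcase $x_n=y$). The uncompromising property only constrains agents that are \emph{strictly} on one side of the output, so it gives no license to move an agent already at $y$; the argument must therefore leave such agents fixed and separately verify that, after collapsing the left agents onto $2\pi-y$ and pulling the right agents down to $y$, these untouched agents still realize the maximum of $\mathbf x'$, yielding exactly the symmetric two-point configuration needed to apply Lemma~\ref{lemma:line-1}.
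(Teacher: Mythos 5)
Your proof is correct, and it rests on the same pillars as the paper's: use the uncompromising property (Lemma~\ref{lemma:sp}) to deform the instance without disturbing the output, until the prediction becomes correct, and then invoke Lemma~\ref{lemma:line-1}. The constructions differ in a way worth noting, however. The paper first splits on whether $x_1+x_n$ equals or exceeds $2\pi$ (taking the latter w.l.o.g.), and then handles the two sub-cases $y>\pi$ and $y<\pi$ by \emph{different} arguments: for $y>\pi$ it does not use uncompromising moves at all, but a direct strategyproofness violation --- agent 1 misreports to $x_1'=2\pi-x_n$, making the prediction correct so that Lemma~\ref{lemma:line-1} forces the new output to $\pi$, which strictly lowers agent 1's true cost; only for $y<\pi$ does it collapse agents (those right of $2\pi-x_1$) onto $2\pi-x_1$ via Lemma~\ref{lemma:sp}. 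Your version avoids both the extra case split and the beneficial-deviation argument: collapsing the profile onto the two points $2\pi-y$ and $y$ pins the output at an extreme of a correct-prediction instance, and the mirror image handles $y<\pi$, so a single output-preserving construction covers everything. Your handling of the boundary agents sitting exactly at $y$ --- leaving them fixed and observing that they already occupy one of the two target points, so the final profile is exactly $\{2\pi-y,y\}$ --- correctly resolves the only delicate point, since uncompromising moves are licensed only for agents strictly to one side of the output. What the paper's route buys is that its misreport argument foreshadows the technique reused in Lemma~\ref{lemma:line-3} and Theorem~\ref{thm:cha}; what yours buys is uniformity and symmetry, with Lemma~\ref{lemma:line-1} as the sole source of contradiction.
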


\begin{proof}
    It follows from Lemma \ref{lemma:line-5} that $y\in (x_1, x_n)$. If $x_1 + x_n = 2\pi$, it reduces to Lemma \ref{lemma:line-1}. We assume that $x_1 + x_n > 2\pi$.
   If the output is $y> \pi$,  agent 1 can decrease the cost by misreporting $x_1'=2\pi-x_n$ such that $\pi=\frac{x_1'+x_n}{2}$ is a correct prediction and thus the output becomes $y'=\pi$ by  Lemma \ref{lemma:line-1}, a contradiction to the strategyproofness. If $y< \pi$,  we can move those agents of locations $x_i>2\pi-x_1$ to $2\pi-x_1$ one by one, and do not change the output $y$ by Lemma \ref{lemma:sp}. In this new instance, $\pi$ is a correct prediction, and the output has to be $\pi$ by Lemma \ref{lemma:line-1}, which leads to a contradiction. Therefore, the only possibility is $y=\pi$.
\end{proof}

\begin{lemma}\label{lemma:line-3}
    If $\pi= x_1$ or $x_n$, then any deterministic  SP $(2-\epsilon)$-consistent mechanism must output $y=\pi$, where $\epsilon\in (0, 1]$.
\end{lemma}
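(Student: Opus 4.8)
The plan is to bootstrap from the interior case already established in Lemma~\ref{lemma:line-2}. By symmetry it suffices to treat $\pi = x_n$; the case $\pi = x_1$ is identical after reflecting the line. If $x_1 = x_n$ then all agents coincide and unanimity forces $y = x_1 = \pi$, so I assume $x_1 < x_n$, meaning $\pi = x_n$ is the strict right endpoint of the agent interval and the prediction is \emph{not} correct.

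The obstacle is that Lemmas~\ref{lemma:line-1} and \ref{lemma:line-2} only pin down the output when $\pi$ lies strictly inside $(x_1, x_n)$ (or is exactly correct), whereas here $\pi$ sits on the boundary, so consistency alone gives no direct leverage. The key idea is therefore to perturb the profile so that the \emph{fixed} prediction $\pi = x_n$ becomes strictly interior, apply Lemma~\ref{lemma:line-2} there, and then transport the conclusion back to the original profile using the uncompromising property (Lemma~\ref{lemma:sp}).

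Concretely, I would consider the auxiliary profile $\mathbf x'$ obtained from $\mathbf x$ by having agent $n$ report some $x_n' > x_n$ while keeping the prediction $\pi = x_n$ unchanged. In $\mathbf x'$ the extreme locations are $x_1$ and $x_n'$, and since $x_1 < x_n = \pi < x_n'$, the prediction is strictly interior; Lemma~\ref{lemma:line-2} then forces $f(\mathbf x', \pi) = \pi = x_n$. Now agent $n$ sits strictly to the right of the output, i.e.\ $x_n' > x_n = f(\mathbf x',\pi)$, so the right-moving branch of uncompromisingness applies: moving agent $n$ to any location $\ge x_n$ leaves the output unchanged. In particular, moving agent $n$ back to $x_n$ recovers the original profile and yields $f(\mathbf x, \pi) = x_n = \pi$, as desired. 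For $\pi = x_1$ the symmetric argument uses an auxiliary agent at $x_1' < x_1$ together with the left-moving branch of uncompromisingness.

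The main point to verify carefully is that the prediction is genuinely held fixed across the perturbation---it is an independent input to $f$, not tied to the reported locations---so that the auxiliary instance is a legitimate input satisfying the hypotheses of Lemma~\ref{lemma:line-2}, and that the correct directional branch of the uncompromising property is invoked when collapsing $\mathbf x'$ back to $\mathbf x$.
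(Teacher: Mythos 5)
Your proof is correct, and its core idea coincides with the paper's: perturb the agent sitting at the prediction outward past $\pi$ so that the prediction becomes strictly interior, then invoke Lemma~\ref{lemma:line-2} to pin the output of the perturbed profile at $\pi$. Where you differ is in how the conclusion is carried back to the original profile. The paper argues by contradiction via strategyproofness: assuming $y>x_1$ (with $\pi=x_1$), agent $1$ --- whose true location is $\pi$ --- would misreport to $x_1-\delta$, receive output $\pi$ by Lemma~\ref{lemma:line-2}, and drop her cost to $0$, contradicting SP; the case $y<x_1$ is then excluded by Lemma~\ref{lemma:line-5}. You instead run the inference in the reverse direction: treat the perturbed profile as the starting point, note that the perturbed agent lies strictly beyond the output, and apply the uncompromising property (Lemma~\ref{lemma:sp}) to slide that agent back to $\pi$ without changing the output, which directly yields $f(\mathbf x,\pi)=\pi$. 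Your route is a direct derivation rather than a contradiction, and it dispenses with Lemma~\ref{lemma:line-5} entirely; the paper's route uses only raw SP (plus Lemma~\ref{lemma:line-5}) rather than the stronger uncompromising characterization, so each proof leans on a slightly different part of the toolkit. Your explicit attention to the fact that the prediction is a fixed, independent input across the perturbation --- and your handling of the degenerate case $x_1=x_n$ via unanimity --- are both sound and consistent with how the paper uses Lemma~\ref{lemma:sp} with predictions held fixed elsewhere.
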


\begin{proof}
We only consider the case when $\pi= x_1$. If the output is $y>x_1$, then agent 1 can decrease the cost to 0 by misreporting the location as $x_1'=x_1-\delta$ for some sufficiently small $\delta>0$ such that the output becomes $y'=\pi$ by Lemma \ref{lemma:line-2}, resulting in a contradiction to the strategyproofness. It follows from Lemma  \ref{lemma:line-5} that $y=x_1$. 
\end{proof}

We now show that MinMaxP is the unique mechanism with desired performance guarantees. 

\begin{theorem}\label{thm:cha}
    For any $\epsilon\in (0, 1]$, MinMaxP is the only deterministic SP mechanism with $(2-\epsilon)$-consistency and bounded robustness.
\end{theorem}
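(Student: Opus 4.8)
The plan is to prove the two inclusions separately. First I would check that MinMaxP does satisfy the stated properties: it is SP by Agrawal et al.~\cite{agrawal2022learning}, and since it is $1$-consistent and $2$-robust, it is in particular $(2-\epsilon)$-consistent for every $\epsilon\in(0,1]$ and has bounded robustness. This direction is essentially a citation. The substance is the converse: any deterministic SP mechanism $f$ that is $(2-\epsilon)$-consistent and has bounded robustness must agree with MinMaxP on every instance, i.e.\ $f(\mathbf x,\pi)=\max(x_1,\min(x_n,\pi))$. Note that bounded robustness enters here only through unanimity, which is what powers Lemma~\ref{lemma:line-5}.

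To establish the converse I would split on the position of the prediction $\pi$ relative to the agents' range $[x_1,x_n]$. When $\pi\in[x_1,x_n]$ there is nothing new to do: Lemma~\ref{lemma:line-2} (for $x_1<\pi<x_n$) and Lemma~\ref{lemma:line-3} (for $\pi=x_1$ or $\pi=x_n$) already force $f(\mathbf x,\pi)=\pi$, which is exactly the value $\max(x_1,\min(x_n,\pi))$ returned by MinMaxP in that regime; the degenerate case $x_1=x_n$ is covered by unanimity. Hence the only cases left are the out-of-range predictions $\pi<x_1$ and $\pi>x_n$, where MinMaxP returns $x_1$ and $x_n$ respectively, and this is the heart of the argument.

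For $\pi<x_1$ I would argue by contradiction. Suppose $f(\mathbf x,\pi)=y\neq x_1$; by Lemma~\ref{lemma:line-5} we have $y\in[x_1,x_n]$, so in fact $y>x_1$. Then agent $1$ sits strictly below the output, so by uncompromisingness (Lemma~\ref{lemma:sp}) I may relocate agent $1$ from $x_1$ to $x_1'=\pi$, a move that stays $\le y$ and therefore leaves the output equal to $y$. But in the new profile the relocated agent is the unique leftmost point and coincides with $\pi$, so Lemma~\ref{lemma:line-3} forces the output to be $\pi$; since $\pi<x_1<y$ this contradicts the output being $y$. Hence $y=x_1$. (Equivalently, dragging agent $1$ strictly below $\pi$ puts $\pi$ in the interior of the new range and invokes Lemma~\ref{lemma:line-2} instead.) The case $\pi>x_n$ follows by the mirror-image argument on the rightmost agent, pushing agent $n$ up to $\pi$.

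The main obstacle, and the only place requiring a genuinely new idea, is this out-of-range case, since the existing lemmas say nothing directly when $\pi\notin[x_1,x_n]$. The key maneuver is to exploit uncompromisingness to drag the nearest extreme agent onto (or past) the prediction: this pins the output unchanged while simultaneously bringing the configuration into the regime already covered by Lemmas~\ref{lemma:line-2}--\ref{lemma:line-3}, reducing the contradiction to a routine application of those results. The verifications to handle with care are that the agent being moved lies strictly on the correct side of $y$ (so that uncompromisingness applies with the reported value $\le y$ or $\ge y$ as appropriate) and that after the move $\pi$ genuinely becomes an extreme point, or an interior point, of the new profile.
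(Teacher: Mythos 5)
Your proposal is correct and follows essentially the same route as the paper's proof: the in-range case is dispatched by Lemmas~\ref{lemma:line-2} and~\ref{lemma:line-3}, and for $\pi<x_1$ the paper uses exactly your maneuver --- move agent $1$ to $x_1'=\pi$, invoke uncompromisingness (Lemma~\ref{lemma:sp}) to keep the output at $y$, and derive a contradiction with Lemma~\ref{lemma:line-3}, with the symmetric argument for $\pi>x_n$. The only difference is that the paper also includes a preliminary strategyproofness bound $d(x_1,y)\le x_1-\pi$, which your streamlined version correctly identifies as unnecessary for the contradiction.
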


\begin{proof}
    Let $f$ be a deterministic SP  mechanism with $(2-\epsilon)$-consistency and bounded robustness (which implies the unanimity). When the prediction $\pi$ lies in the interval $[x_1,x_n]$, it follows from Lemma \ref{lemma:line-2} and \ref{lemma:line-3} that the output must be $\pi$, the same as that in MinMaxP. 

    When $\pi<x_1$, if $d(x_1, y)> x_1-\pi$, then agent 1 can decrease the cost by misreporting $x_1'=\pi$ such that the output becomes $\pi$. Hence, it must be  $d(x_1, y)\le x_1-\pi$. Since $y\in [x_1, x_n]$,  we have $y\in [x_1, \min(x_n, 2x_1-\pi)]$.  If $y>x_1$,   let agent 1 move to $x_1'=\pi$, and by Lemma \ref{lemma:sp} the output for the new instance remains to be $y$. While Lemma \ref{lemma:line-3} claims that the output should be the prediction $\pi=x_1'$, there is a contradiction.  Therefore, the mechanism must output $y=x_1$ when $\pi<x_1$, the same as that in MinMaxP. The symmetric arguments work for the case when $\pi>x_n$.
\end{proof}

\subsection{Randomized Mechanisms}\label{sec:rand}

The characterization in Theorem \ref{thm:cha} does not apply to randomized mechanisms. There is a randomized SP, 1-consistent and 2-robust mechanism: If $x_1\le \pi\le x_n$, return $y=\pi$. If $\pi<x_1$,  return $y=x_1$ with probability $\max(0.5, 1-\frac{x_1-\pi}{x_n-x_1})$ and $y=x_n$ with probability $\min(0.5, \frac{x_1-\pi}{x_n-x_1})$. If $\pi>x_n$,  return $y=x_n$ with probability $\max(0.5, 1-\frac{\pi-x_n}{x_n-x_1})$ and $y=x_1$ with probability $\min(0.5, \frac{\pi-x_n}{x_n-x_1})$. The 1-consistency and 2-robustness are clear, and the strategyproofness can also be verified. 

Consider the following classic randomized mechanism without predictions by Procaccia and  Tennenholtz \cite{procaccia2013approximate}, which is SP and $\frac{3}{2}$-approximate for the maximum cost. 

\begin{mechanism}[LRM]\label{mec:line-classic}
     Given profile $\mathbf{x}$, return $x_1, x_n, \frac{x_1+x_n}{2}$ with probability $\frac{1}{4}, \frac{1}{4}, \frac{1}{2}$ respectively.
\end{mechanism}

In the prediction setting, we first give a lower bound $\frac{3}{2}$ on the approximation ratio of randomized mechanisms when the prediction error bound is no less than $\frac12$. It indicates that the prediction-free $\frac{3}{2}$-approximation LRM is indeed the best possible when the prediction is not very accurate, and we could discard the prediction information without harming the performance guarantee.

\begin{theorem}
    When the prediction error bound is {$\eta \ge \frac{1}{2}$}, no randomized SP mechanism has an approximation ratio better than $\frac{3}{2}$.
\end{theorem}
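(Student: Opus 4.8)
The plan is to fix a single prediction and play one ``parent'' profile against two one-agent deviations, using the (randomized analogue of the) uncompromisingness from Lemma \ref{lemma:sp} to couple the resulting output distributions. Concretely, I take $\pi=1$ as the fixed prediction and the parent profile $B=\{0,2\}$, whose optimum is $1=\pi$. From $B$ I move the right agent outward to obtain $A=\{0,4\}$ (optimum $2$, $\mathrm{MC}^*=2$) and, separately, move the left agent outward to obtain $C=\{-2,2\}$ (optimum $0$, $\mathrm{MC}^*=2$). Both $A$ and $C$ have prediction error exactly $\tfrac12$ relative to $\pi=1$, so both are admissible whenever $\eta\ge\tfrac12$; this is precisely where the threshold enters, and it also explains why the guarantee improves below $\tfrac12$. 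Note that $A$ and $C$ differ from $B$ by a single agent's move, so strategyproofness with the common prediction $\pi=1$ links all three.

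Writing $Y_B,Y_A,Y_C$ for the (random) outputs of a fixed SP mechanism $f(\cdot,\pi)$ on $B,A,C$, I would first record, by the randomized version of Lemma \ref{lemma:line-5}, that each output is supported in the corresponding agent interval, so in particular $Y_B\in[0,2]$. The heart of the argument is uncompromisingness at the level of distributions: since agent $2$ in $B$ sits at $2$ and recedes to $4\ge z$ for every $z<2$, the output cannot change on that side, giving $\Pr[Y_A\le z]=\Pr[Y_B\le z]$ for all $z<2$; symmetrically, moving the left agent from $0$ to $-2$ gives $\Pr[Y_C\ge z]=\Pr[Y_B\ge z]$ for all $z>0$. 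From the first coupling, using $\max(|y|,|y-4|)=|y-2|+2$,
\[
\mathrm{MC}(A,Y_A)=\mathbb E\,|Y_A-2|+2\ \ge\ \int_{[0,2)}(2-y)\,dF_{Y_B}(y)+2\ =\ \big(2-\mathbb E[Y_B]\big)+2,
\]
and from the second, using $\max(|y+2|,|y-2|)=|y|+2$ together with $Y_B\ge 0$,
\[
\mathrm{MC}(C,Y_C)=\mathbb E\,|Y_C|+2\ \ge\ \mathbb E\big[Y_B\,\mathbf 1(Y_B>0)\big]+2\ =\ \mathbb E[Y_B]+2 .
\]
Adding these, the $\mathbb E[Y_B]$ terms cancel and $\mathrm{MC}(A,Y_A)+\mathrm{MC}(C,Y_C)\ge 6=\tfrac32(\mathrm{MC}^*_A+\mathrm{MC}^*_C)$. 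Hence at least one of $A,C$ has ratio at least $\tfrac32$, contradicting any ratio strictly below $\tfrac32$ for error bound $\eta\ge\tfrac12$. Notably the parent $B$ is used only to supply a common profile and the bound $Y_B\in[0,2]$; I do not need $B$ itself to have good ratio.

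The main obstacle is the step I flagged as the heart: establishing uncompromisingness for \emph{randomized} mechanisms, i.e.\ that strategyproofness forces the output cumulative distribution function to be invariant on one side when an extreme agent recedes further. Lemma \ref{lemma:sp} gives this only for deterministic mechanisms. I would obtain the randomized statement either via the characterization of one-dimensional randomized SP rules as lotteries over deterministic (generalized-median) rules --- integrating the deterministic uncompromisingness against the mixing measure yields exactly the CDF identities above --- or, absent a clean citation, by arguing directly that if positive mass of $Y$ crossed to the far side as the agent receded, that agent could profitably misreport, violating SP. A secondary point to pin down is the objective convention for random outputs: the identities above use the realized maximum cost $\mathbb E_{Y}[\max_i d(x_i,Y)]$ (equivalently $\mathbb E\,|Y-m|+r$ on the line, with $m,r$ the midpoint and half-width of the extreme agents), which is the convention under which the benchmark LRM mechanism is $\tfrac32$-approximate; I would state this explicitly so that the reduction to the two extreme agents is valid.
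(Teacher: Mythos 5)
Your construction (fixed prediction $\pi=1$, parent profile $B=(0,2)$, and the two single-agent deviations $A=(0,4)$ and $C=(-2,2)$, each of error exactly $\tfrac12$) is essentially the paper's: the paper uses the same $\pi$, the same $B$, and the same deviation to $(0,4)$, dispensing with $C$ by a w.l.o.g.\ step. Your remark on the objective convention (expected realized maximum cost) also matches what the paper's proof actually computes. However, the step you yourself flag as the heart of the argument is a genuine gap, and it cannot be filled the way you propose, because the claim is false.

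The claimed distributional uncompromisingness --- $\Pr[Y_A\le z]=\Pr[Y_B\le z]$ for all $z<2$ when the agent at $2$ recedes to $4$ --- fails for expected-cost SP mechanisms. Take LRM (Mechanism \ref{mec:line-classic}), which ignores the prediction and is SP: on $B=(0,2)$ it outputs $0,1,2$ with probabilities $\tfrac14,\tfrac12,\tfrac14$, while on $A=(0,4)$ it outputs $0,2,4$ with probabilities $\tfrac14,\tfrac12,\tfrac14$; hence $\Pr[Y_B\le 1]=\tfrac34$ but $\Pr[Y_A\le 1]=\tfrac14$. Expected-cost SP constrains only the deviating agent's expected distance, not the whole output distribution, so probability mass can and does cross sides. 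The same example defeats both of your proposed repairs: (i) the lotteries-over-generalized-median-rules characterization is valid only under stronger, stochastic-dominance-type strategyproofness, not under the expected-cost notion used in this paper --- any such lottery would inherit the CDF identity from the deterministic uncompromisingness of Lemma \ref{lemma:sp}, so LRM, which violates the identity, is SP here yet is not such a lottery; (ii) no direct argument can establish a false statement.

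Fortunately, your proof needs far less than the coupling. Plain SP with the common prediction $\pi=1$ gives $\mathbb E|Y_A-2|\ge \mathbb E|Y_B-2|$ (the agent at $2$ must not gain by reporting $4$) and symmetrically $\mathbb E|Y_C|\ge\mathbb E|Y_B|$ (the agent at $0$ must not gain by reporting $-2$). Then, using your identities $\max(|y|,|y-4|)=|y-2|+2$ and $\max(|y+2|,|y-2|)=|y|+2$,
\begin{align*}
\mathrm{MC}(A,Y_A)+\mathrm{MC}(C,Y_C)
\;=\; \mathbb E|Y_A-2|+\mathbb E|Y_C|+4
\;\ge\; \mathbb E\bigl[|Y_B-2|+|Y_B|\bigr]+4
\;\ge\; 2+4
\;=\; \tfrac32\bigl(\mathrm{MC}^*_A+\mathrm{MC}^*_C\bigr),
\end{align*}
by the triangle inequality, so one of the two admissible instances has ratio at least $\tfrac32$. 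Note that this repair needs neither $Y_B\in[0,2]$ nor any randomized analogue of Lemma \ref{lemma:line-5}. With this replacement your argument is correct, and it is exactly the paper's proof in symmetrized form: the paper observes $\mathbb E|Y_B|+\mathbb E|Y_B-2|\ge 2$, assumes w.l.o.g.\ that $\mathbb E|Y_B-2|\ge1$, and then uses only the single deviation to $(0,4)$.
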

\begin{proof}
    Consider $n=2$ agents and the location profile $\mathbf{x}=(0, 2)$. The prediction is $\pi=1$.
    The output $y$ satisfies either $\mathbb{E}[|y-x_1|]\ge 1$ or $\mathbb{E}[|y-x_2|]\ge 1$. We assume that $\mathbb{E}[|y-x_2|]\ge 1$. The maximum cost is 
    $$
    \mathbb{E}\left[\left|y-\frac{x_1+x_2}{2}\right| + \frac{|x_1-x_2|}{2}\right] = \mathbb{E}[|y-1|+1]=\mathbb{E}[|y-1|] + 1.
    $$

    Then we consider an instance $\mathbf x'=(0,x_2')$ with $x_2'=4$. The prediction $\pi=1$ has an error $\eta(\pi,\mathbf x')=\frac12$, within the error bound $\eta$.  By the strategyproofness, we must have $\mathbb{E}[y'-x_2]= \mathbb{E}[y-x_2]\ge 1$, as otherwise agent 2 gains from deviating from $x_2$ to $x_2'$. In the instance $\mathbf x'$, while the optimal maximum cost is 2,  the maximum cost induced by $y'$ is
    \begin{align*}
        \mathbb{E}\left[\left|y'-\frac{x_1+x_2'}{2}\right| + \frac{|x_1-x_2'|}{2}\right] 
        &= \mathbb{E}[|y'-2|]+2\ge 1+2=3.
    \end{align*}
    Therefore, the approximation ratio is at least $\frac32$. 
\end{proof}

When the prediction is more accurate, it is not surprising that we could expect better performance guarantees (for example, the MinMaxP mechanism is $(1+\min(1,\eta))$-approximation). 
Combining LRM and MinMaxP, Balkanski et al. \cite{balkanski2024randomized}  propose the following randomized mechanism with predictions. 

\begin{mechanism}\label{mec:line+line-classic}
    Let $q\in [0, 1]$. Given profile $\mathbf{x}$ and prediction $\pi$, run MinMaxP with probability $1-q$, and LRM with probability $q$.
\end{mechanism}

While Balkanski et al. \cite{balkanski2024randomized} show that this mechanism is $(1+\frac{q}{2})$-consistent and $(2-\frac{q}{2})$-robust, they do not consider the approximation ratio with respect to the prediction error bound. Regarding this concern, the following result presents a smooth transformation of the guarantees when the prediction error increases. The proof is a trivial combination of the guarantees of LRM and MinMaxP.

\begin{proposition}
    Mechanism \ref{mec:line+line-classic} is SP and $(1+\frac{q}{2} + (1-q)\min(1, \eta))$-approximation with respect to error bound $\eta$. 
\end{proposition}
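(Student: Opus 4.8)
The plan is to derive both properties directly from the corresponding guarantees for MinMaxP (Mechanism~\ref{mec:line}) and LRM (Mechanism~\ref{mec:line-classic}) in isolation, exploiting the fact that Mechanism~\ref{mec:line+line-classic} is nothing more than their probabilistic mixture with weights $1-q$ and $q$.

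For strategyproofness, I would fix an agent $i$, a profile $\mathbf x$, a prediction $\pi$, and a potential misreport $x_i'$. Writing $a_i(\cdot)$ and $b_i(\cdot)$ for the expected cost of agent $i$ under MinMaxP and under LRM respectively, the cost of $i$ under the mixture is the convex combination $(1-q)\,a_i + q\,b_i$. Since both MinMaxP and LRM are SP, truthful reporting minimizes each of $a_i$ and $b_i$ individually; hence it also minimizes their convex combination, which gives the SP property of the mixture with no case analysis.

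For the approximation guarantee, the key observation is a subadditivity bound on the maximum cost under the max-of-expected-costs model. For any fixed $(\mathbf x,\pi)$, let $y$ denote the random output of the mixture; then for each agent $i$ the expected cost decomposes as $(1-q)\,\mathbb E_{\text{MinMaxP}}[d(x_i,y)] + q\,\mathbb E_{\text{LRM}}[d(x_i,y)]$, and since $\max_i(\alpha_i+\beta_i)\le \max_i\alpha_i + \max_i\beta_i$, the maximum cost of the mixture is at most $(1-q)\,\text{MC}_{\text{MinMaxP}} + q\,\text{MC}_{\text{LRM}}$. I would then substitute the two known bounds: on any instance whose error is at most $\eta$, MinMaxP costs at most $(1+\min(1,\eta))$ times the optimum (using that $\min(1,\cdot)$ is nondecreasing, so the bound at the instance's true error $\eta'\le\eta$ is dominated by the bound at $\eta$), while the prediction-free LRM costs at most $\tfrac32$ times the optimum unconditionally. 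Dividing by $\text{MC}(\mathbf x, o(\mathbf x))$ and collecting terms yields $(1-q)(1+\min(1,\eta)) + \tfrac{3q}{2} = 1 + \tfrac q2 + (1-q)\min(1,\eta)$, exactly the claimed ratio.

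The argument has essentially no hard step, consistent with the authors' remark that it is a trivial combination; the only point requiring care is that under the present cost model the maximum cost of a randomized mechanism is the maximum over agents of the expected distances, rather than the expectation of the maximum distance. Consequently the mixture's maximum cost is only bounded above by (not equal to) the convex combination of the two component maximum costs. Fortunately this inequality points in precisely the direction we need for an upper bound, so the combination suffices and no further argument is required.
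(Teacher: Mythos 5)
Your proposal is correct and follows essentially the same route as the paper, which simply combines the $(1+\min(1,\eta))$-approximation of MinMaxP and the $\frac{3}{2}$-approximation of LRM linearly with weights $1-q$ and $q$. The only difference is that you carefully justify the step the paper treats as immediate --- namely that under the max-of-expected-costs model the mixture's maximum cost is bounded by the convex combination of the component maximum costs via $\max_i(\alpha_i+\beta_i)\le\max_i\alpha_i+\max_i\beta_i$ --- which is a welcome precision but not a different argument.
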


\begin{proof}
    The strategyproofness is clear because both mechanisms, LRM and MinMaxP, are SP. For the approximation ratio with respect to error bound $\eta$, since MinMaxP is $(1+\min(1,\eta))$-approximation and LRM is $\frac{3}{2}$-approximation, we have the approximation
    \begin{align*}
        \gamma(\eta) &= \frac{3q}{2} + (1-q) (1+\min(1,\eta)) = 1\!+\!\frac{q}{2}\! +\! (1\!-\!q)\min(1, \eta).
    \end{align*}
    The consistency and robustness follow immediately by letting $\eta=0$ and $\eta\to\infty$. 
\end{proof}

When the prediction error bound is $\eta< \frac12$, clearly Mechanism \ref{mec:line+line-classic} has a better approximation ratio than the prediction-free  LRM mechanism, and when $\eta> \frac12$ it is better than the MinMaxP mechanism. If the prediction error $\eta$ is known, the decision-maker can certainly select either $q=0$ or $q=1$ to obtain the better individual mechanism. If $\eta$ is unknown, the parameter $q$ plays a role of balancing the two individual mechanisms. 

Further, the performance guarantee of Mechanism \ref{mec:line+line-classic} cannot be improved in the sense that no randomized SP mechanism guarantees $1+\delta$ consistency for some $\delta \in [0, 0.5]$ and guarantees robustness better than $2-\delta$ simultaneously \cite{balkanski2024randomized}.

\subsection{$L_p$-Norm Social Cost}\label{sec:norm}

Now, in this subsection, our focus  moves temporarily from the maximum cost objective to a more general objective of minimizing the $L_p$-norm social cost (where $p\ge 1$).  
The $L_p$-norm social cost is defined as 
\begin{align*}
    \text{SC}_p(\mathbf{x}, f(\mathbf{x}, \pi)) = \left(\sum_{i \in N} d(x_i, y)^p \right)^{\frac{1}{p}}.
\end{align*}

When $p = 1$ , it reduces to the total cost which corresponds to the notion of efficiency; and when $p \rightarrow \infty$, it reduces to the maximum cost which corresponds to fairness or equity \cite{chan2021mechanismsurvey}. Therefore, for objectives with other $p$ values $1<p<+\infty$, they can be simply and intuitively used to model the tradeoffs between the above two standard objectives (i.e., efficiency vs fairness/equity). Specifically, when $p$ increases, higher costs have a disproportionately larger influence on the objective. In other words, the social planner still considers all agents, but gives more weight to agents who experience larger costs. This captures a natural fairness adjustment: agents with large costs matter more, but not exclusively so.

Because  there exist SP optimal mechanisms for social cost without predictions \cite{procaccia2013approximate} and  the maximum cost has been discussed in earlier this section, we only need to look at the case when $1 < p < +\infty$.
Feigenbaum et al. \cite{feigenbaum2017approximately} show that the median mechanism achieves a $2^{1-1/p}$-approximation for the $L_p$-norm social cost, and this bound is tight for prediction-free mechanisms. 
When predictions are available, we continue to employ the MinMaxP mechanism (Mechanism \ref{mec:line}). 
We first introduce a useful lemma on the ``maximum" objective value, and then prove the bound of MinMaxP.

\begin{lemma}\label{lemma:norm1}
    For any profile with $x_1 \le x_2 \le \dots \le x_n$ and any finite $p \ge 1$, the maximum $L_p$-norm  social cost with $y \in [x_1, x_n]$ is attained at either $y = x_1$ or $y = x_n$.
\end{lemma}
\begin{proof}
    Note that maximizing $\text{SC}_p(\mathbf{x}, y)$ is equivalent to maximizing
    $$F(y) = \sum_{i=1}^n |x_i - y|^p,$$
    in the sense of optimal solutions. 
Let $g(z) = |z|^p$.  It is easy to see that $g$ is convex for all $p\ge 1$. 
    For each fixed $x_i$, the function $h_i(y) = g(x_i - y)$ is the composition of a convex function with an affine function $y \mapsto x_i - y$, hence $h_i(y)$ is convex. The sum $F(y) = \sum_{i=1}^n h_i(y)$ is therefore convex.
    Since a convex function on a closed interval $[x_1, x_n]$ attains its maximum at one of the endpoints, we have   $\max_{y \in [x_1, x_n]} F(y) = \max\{F(x_1), F(x_n)\}$. 
\end{proof}


\begin{theorem}
    Mechanism \ref{mec:line} is 1-consistent and $\mathcal{O}\left(n^{1/p}\right)$-robust.
\end{theorem}

\begin{proof}
    When the prediction is accurate, it must be $\pi\in [x_1, x_n]$, and the mechanism returns this prediction location, implying the 1-consistency. For the robustness, denote by $\text{ALG}$ the $L_p$-norm social cost induced by the mechanism. 
    It follows from Lemma \ref{lemma:norm1} that,  for any profile $\mathbf x$, the maximum possible $L_p$-norm social cost is achieved by the mechanism either when  $y=x_1$ or when $y=x_n$. Assume w.l.o.g. that $y=x_1$ for the worst case. We have
    \begin{align*}
        \text{ALG}^p = \sum_{i\in N}(x_i-y)^p = \sum_{i\in N}(x_i-x_1)^p \le (n-1)\cdot (x_n-x_1)^p.
    \end{align*}
 Let $y^*\in[x_1,x_n]$ be an optimal solution, and the optimal $L_p$-norm social cost $\text{OPT}$ satisfies
    \begin{align*}
        \text{OPT}^p \ge (y^*-x_1)^p+(x_n-y^*)^p\ge 2\cdot \left(\frac{x_n-x_1}{2}\right)^p,
    \end{align*}
    where the second inequality follows from the convexity of $x^p$.
    Therefore, we have
    \begin{align*}
        \frac{\text{ALG}^p}{\text{OPT}^p} \le \frac{(n-1)\cdot (x_n-x_1)^p}{2\cdot \left(\frac{x_n-x_1}{2}\right)^p} = 2^{p-1}\cdot (n-1)
    \end{align*}
    which indicates
    \begin{align*}
        \frac{\text{ALG}}{\text{OPT}} \le 2^{1-\frac{1}{p}}\cdot (n-1)^{\frac{1}{p}} = \mathcal{O}\left(n^{1/p}\right),
    \end{align*}
    establishing the robustness. 
\end{proof}

Furthermore, we show that the MinMaxP mechanism is indeed the unique  1-consistent mechanism with bounded robustness.  Before this characterization, we provide several lemmas.

\begin{lemma}\label{lemma:lp-1}
  For the $L_p$-norm social cost with $1<p<+\infty$,  if $x_1 < \pi < x_n$, then any deterministic SP $1$-consistent and bounded robust mechanism must output $y=\pi$.
\end{lemma}

\begin{proof}
 Let $f$ be a deterministic SP $1$-consistent and bounded robust mechanism.   Let $y_0=f(\mathbf x,\pi)$ be the output   for instance $\mathbf x$ with correct prediction $\pi$. Suppose for contradiction that $y_0>\pi$. By Lemma \ref{lemma:line-5}, we know $y_0\in (\pi, x_n]$. First, we move those agents with location $x_i>y_0$ to $y_0$ one by one. By Lemma \ref{lemma:sp}, the output $y_0$ does not change. Second, there must exist a location  $z<\pi$ such that if we move all agents with location $x_i<y_0$ to $z$,  $\pi$ is still an optimal solution. The existence is guaranteed because of the continuity of the social cost function and the fact that  when $1<p$ the optimal solution will never be the two extreme agent locations (as the derivative of the social cost function is non-zero at the extreme agent locations). 
  For this new instance where $n_1$ agents are at $z$ and $n-n_1$ agents are at $y_0$,  by Lemma \ref{lemma:sp}, the output $y_0$ also does not change. While the prediction $\pi$ is an optimal solution, $y_o>\pi$ must be suboptimal, giving a contradiction to the 1-consistency. 
\end{proof}

\begin{lemma}\label{lemma:lp-2}
    If $\pi\in \{x_1, x_n\}$, then any deterministic SP $1$-consistent and bounded robust mechanism must output $y=\pi$.
\end{lemma}

\begin{proof}
    We only consider the case when $\pi= x_1$. If the output is $y>x_1$, then agent 1 can decrease the cost to 0 by misreporting the location as $x_1'=x_1-\delta$ for some sufficiently small $\delta>0$ such that the output becomes $y'=\pi$ by Lemma \ref{lemma:lp-1}, leading to a contradiction to the strategyproofness. It follows from Lemma  \ref{lemma:line-5} that $y=x_1$. 
\end{proof}

\begin{lemma}\label{lemma:lp-3}
For any deterministic SP $1$-consistent and bounded robust mechanism,  if $\pi < x_1$ , then it must output $y=x_1$, and if $\pi > x_n$ it must output $y=x_n$.
\end{lemma}

\begin{proof}
    We only consider the case when $\pi < x_1$. Let $y$ be the output.  If $d(x_1, y)> x_1-\pi$, then agent 1 can decrease the cost by misreporting $x_1'=\pi$ such that the output becomes $\pi$ by Lemma~\ref{lemma:lp-2}. Hence, it must be  $d(x_1, y)\le x_1-\pi$. Since $y\in [x_1, x_n]$ by the 1-consistency,  we have $y\in [x_1, \min(x_n, 2x_1-\pi)]$.  If $y>x_1$, let agent 1 move to $x_1'=\pi$, and by Lemma \ref{lemma:sp} the output for the new instance remains to be $y$. However, Lemma \ref{lemma:lp-2} claims that the output should be the prediction $\pi=x_1'$, giving a contradiction.  Therefore, the mechanism must output $y=x_1$ when $\pi<x_1$.
\end{proof}

Combining Lemmas~\ref{lemma:lp-1}, \ref{lemma:lp-2} and \ref{lemma:lp-3} gives the following characterization.

\begin{theorem}
   For the $L_p$-norm social cost with $1<p<+\infty$, MinMaxP is the unique deterministic SP mechanism with $1$-consistency and bounded robustness.
\end{theorem}

\section{The Tree Metric Space}\label{sec:tree}

In this section, we consider the problem on a tree $T$. 
A tree $T \subset \mathbb{R}^m$ is a closed and connected subset of the $m$-dimensional Euclidean space that consists of a finite number of closed curves of finite length, referred to as the \emph{edges}. 
The endpoints of these curves are called \emph{vertices}. 
Both the agents and the facility may be located anywhere on $T$. 
The distance between two points $x, y \in T$, denoted by $d(x, y)$, is defined as the length of the shortest path connecting $x$ and $y$, 
where a \emph{path} is the unique minimal connected subset of $T$ containing both $x$ and $y$. 
The \emph{center} of the path between $x$ and $y$, denoted by $\text{cen}(x, y)$, is the point $z$ on the path such that $d(x, z) = d(y, z)$. 
Because of the structural property of trees, there are no \emph{cycles}, where a cycle is defined as the union of two paths whose intersection consists exactly of their two endpoints.

Given the profile $\mathbf x$ of agents on a tree $T$,  the optimal facility location that minimizes the maximum cost is the center of the longest path between any two agents, denoted by  $\text{cen}(T,\mathbf x)$.
For deterministic strategyproof mechanisms without prediction, Alon et al. \cite{alon2010strategyproof} prove the tight bound of 2 for minimizing the maximum cost. Relying on the prediction, we improve this bound for deterministic and randomized mechanisms in Sections~\ref{sec:tree-d} and \ref{sec:tree-r}, respectively.

\subsection{Deterministic Mechanisms}\label{sec:tree-d}

We first define the normalization of a point with respect to a given profile.
We use $\text{normal}(y, T, \mathbf{x})$ to denote the normalization of a point $y$ for agents located on the tree~$T$. 
If there are agents on at least two sides of $y$, or if $y$ coincides with some agent, then $\text{normal}(y, T, \mathbf{x})$ simply returns $y$; 
otherwise, it moves in the direction where the agents are located until reaching either an agent or a point where agents exist on at least two sides. 
When the tree is a line, we have $\text{normal}(y, T, \mathbf{x}) = \max(x_1, \min(y, x_n)).$

If a point $y$ satisfies $\text{normal}(y, T, \mathbf{x}) \ne y$, we say that $y$ is \emph{outside} the agents. 
If $\text{normal}(y, T, \mathbf{x}) = y$, we say that $y$ is \emph{inside} the agents.
Analogous to the MinMaxP mechanism on the line, we define the Tree MinMaxP mechanism as follows, and extend the performance guarantees by Agrawal et al. \cite{agrawal2022learning} from line to tree.


\begin{mechanism}[Tree MinMaxP]\label{mec:tree}
     Given profile $\mathbf{x}$ and prediction $\pi$ on a tree $T$, return $\text{normal}(\pi, T, \mathbf{x})$.
\end{mechanism}

\begin{theorem}
    Mechanism \ref{mec:tree} (Tree MinMaxP) is SP and $(1+\min(1, \eta))$-approximation. In addition, it is 1-consistent and 2-robust.
\end{theorem}

\begin{proof}
    First, we prove the strategyproofness. If $\pi$ is inside the agents, the output is $\pi$ itself, and the only possibility to change the output is when all agents on one side misreport their locations to another side; however, this would only increase their distance to the output. If $\pi$ is outside the agents, the output is the closest point from $\pi$ that lies in some path of two agents. To change the output,  there must exist a direction of this output such that all agents on this side misreport to some other side, and this would only increase cost. 
    

    The $1$-consistency follows from the fact that a correct prediction $\pi$ must be inside the agents, as otherwise the point $y=\text{normal}(\pi,T,\mathbf{x})$ will decrease all the agents' cost, contradicting to the correctness of $\pi$. 
The 2-robustness follows from, for any $i, j\in N$,
    $$MC(\mathbf x,f(\mathbf x,\pi))=\max_{i\in N}d(x_i, y)\le \max_{i,j\in N}d(i, j)=2\cdot\text{OPT},$$
    where the inequality is because the output $y$ is always inside the agents, and the second equality is a simple observation. 
    When $\eta \le 1$, denoting by $o$ the optimal solution, we have  $$MC(\mathbf x,f(\mathbf x,\pi))=\max_{i\in N}d(x_i, y)\le \max_{i\in N}d(x_i,o)+d(o, y)=\text{OPT}+\eta\cdot \text{OPT}=(1+\eta)\cdot \text{OPT}.$$
   Combined with the 2-robustness, it establishes a $(1+\min(1, \eta))$-approximation.
\end{proof}

The matching lower bound $1+\min(1, \eta)$ naturally follows from 
Theorem \ref{thm:lowl} because the line is a special tree.

\subsection{Randomized Mechanisms}\label{sec:tree-r}

Alon et al. \cite{alon2010strategyproof}  propose the Tree Random Mechanism (without prediction): return each $x_i$ with probability $\frac{1}{n+2}$ and $\text{cen}(T,\mathbf{x})$ with probability $\frac{2}{n+2}$. They show the strategyproofness and an approximation ratio of $\frac{2n+2}{n+2}$. Combining with the Tree MinMaxP mechanism, we propose the following randomized one.



\begin{mechanism}\label{mec:tree+tree-ran} Let $q\in [0, 1]$. Given profile $\mathbf{x}$ and prediction $\pi$ on a tree $T$, run the Tree MinMaxP (Mechanism \ref{mec:tree}) with probability $1-q$ and the Tree Random Mechanism \cite{alon2010strategyproof}  with probability $q$.
\end{mechanism}

\begin{theorem}
    Mechanism \ref{mec:tree+tree-ran} is SP and $(1+\frac{n}{n+2}q+(1-q)\min(1,\eta))$-approximation. Additionally, it is $(1+\frac{n}{n+2}q)$-consistent and $(2-\frac{2}{n+2}q)$-robust.
\end{theorem}

\begin{proof}
    Since both individual mechanisms are SP, Mechanism \ref{mec:tree+tree-ran}  is also SP.
    For the consistency, since Tree MinMaxP is 1-consistent and the Tree Random Mechanism is $\frac{2n+2}{n+2}$-consistent, we have the consistency ratio of
    $$
 q\cdot \frac{2n+2}{n+2} + (1-q)\cdot 1 = 1+\frac{n}{n+2}q.
    $$
For the robustness, since Tree MinMaxP is $2$-robust and the Tree Random Mechanism  is $\frac{2n+2}{n+2}$-robust, we have the consistency ratio of
    $$
     q\cdot \frac{2n+2}{n+2} + (1-q)\cdot 2=2-\frac{2}{n+2}q.
    $$
Moreover, by the approximation ratio  $(1+\min(1,\eta))$ of Tree MinMaxP, Mechanism \ref{mec:tree+tree-ran}  has an approximation ratio of
    $$
     q \cdot \frac{2n+2}{n+2} + (1-q)\cdot (1+\min(1, \eta)) = 1+\frac{n}{n+2}q+(1-q)\min(1,\eta).
    $$
\end{proof}

When the prediction error bound is $\eta< \frac{n}{n+2}$, clearly Mechanism \ref{mec:tree+tree-ran} has a smaller approximation ratio than the prediction-free Tree Random Mechanism, and when $\eta> \frac{n}{n+2}$ it is better than the Tree MinMaxP mechanism. The selection of the value of $q$ could balance the two individual mechanisms when $\eta$ is unknown. 

\section{Two-dimensional Metric Spaces}\label{sec:2d}

In this section, we study the problem on the 2-dimensional space $(\mathbb R^2,d_p)$ with $L_p$ metrics, where the distance of two points $x,y\in\mathbb R^2$ is 
$$d_p(x, y) =(\sum_{i=1}^{2} |x_i - y_i|^p)^{1/p}.$$
We denote the space by $\ell_2^p$.  

For $\ell_2^1$ space or even $\ell_n^1$ spaces, since the distance is coordinate-wise independent, we can run the mechanisms in Section \ref{sec:line} for each coordinate, and retain strategyproofness and all performance guarantees as in the line.
Hence, we only consider $\ell_2^p$ spaces with $p\ge 2$. 
For convenience, we denote the location of agent $i$ as $x_i=(a_i, b_i)$, the prediction as $\pi=(a_{\pi}, b_{\pi})$ and the output as $y=(a_y, b_y)$. Denote $a_{\min}=\min_{i\in N}a_i$ and $a_{\max}=\max_{i\in N}a_i$, and these notations are similar for coordinate $b$. 
We study deterministic mechanisms in Section \ref{subsec:der2} and randomized mechanisms in Section \ref{subsec:ran2}.

\subsection{Deterministic Mechanisms}\label{subsec:der2}

Consider the Minimum Bounding Box mechanism proposed in  \cite{agrawal2022learning}. If the minimum rectangle that contains all
the agent locations, contains the prediction point $\pi$, then we output $\pi$; otherwise we select the boundary point with the minimum distance from $\pi$. Equivalently, it runs MinMaxP  for each dimension separately. 
Agrawal et al. \cite{agrawal2022learning} prove that it achieves a $1+\min\{\eta,\sqrt 2\}$ approximation in the $\ell_2^2$ space, implying the 1-consistency and $(1+\sqrt2)$-robustness. We extend their results to $\ell_2^p$ spaces for any $p\ge 2$. 

\begin{mechanism}[Minimum Bounding Box]\label{mec:2d}
 Given location profile $\mathbf{x}$ and prediction $\pi$,  return $y=(a_y, b_y)$ with $a_y = \max(a_{\min}, \min(a_{\max}, a_{\pi}))$ and $b_y = \max(b_{\min}, \min(b_{\max}, b_{\pi}))$.
\end{mechanism}

\begin{theorem}
     Minimum Bounding Box is SP and has approximation ratio of $1+\min\{\eta,2^{\frac{1}{p}}\}$ with respect to prediction error bound $\eta$. In addition, it is $1$-consistent and $(1+2^{\frac{1}{p}})$-robust.
\end{theorem}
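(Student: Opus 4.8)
The plan is to treat the two coordinates independently throughout, reducing everything to the line analysis of MinMaxP (Mechanism \ref{mec:line}) recombined through the $l_p$ norm. For strategyproofness, I would first observe that the output coordinate $a_y$ depends only on the reported $a$-coordinates $(a_1,\dots,a_n)$ and $a_\pi$, while $b_y$ depends only on the $b$-coordinates and $b_\pi$; thus a single agent's misreport of its $a$-coordinate cannot affect $b_y$, and vice versa. Since MinMaxP is SP on the line, truthful reporting of $a_i$ minimizes $|a_i - a_y|$ over all reports (holding the others fixed), and symmetrically for $|b_i - b_y|$. Because the cost $d_p(x_i,y) = (|a_i-a_y|^p + |b_i-b_y|^p)^{1/p}$ is coordinate-wise nondecreasing in the pair $(|a_i-a_y|,|b_i-b_y|)$, any deviation can only weakly increase the cost, so the mechanism is SP.

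For the approximation guarantee, fix an optimal location $o = o(\mathbf{x}) = (a_o, b_o)$ and write $\text{OPT} = \text{MC}(\mathbf{x}, o)$. I would first record that $o$ lies inside the bounding box, i.e.\ $a_o \in [a_{\min}, a_{\max}]$ and $b_o \in [b_{\min}, b_{\max}]$: if, say, $a_o > a_{\max}$, then replacing $a_o$ by $a_{\max}$ strictly decreases $|a_i - a_o|$ for every agent, hence strictly decreases every cost, contradicting optimality. Next, by the triangle inequality $d_p(x_i, y) \le d_p(x_i, o) + d_p(o, y) \le \text{OPT} + d_p(o, y)$ for each $i$, so it suffices to prove the single bound $d_p(o, y) \le \min\{\eta, 2^{1/p}\}\cdot\text{OPT}$, and the theorem follows by taking the maximum over $i$.

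The core is then two separate per-coordinate bounds, each recombined via the $l_p$ norm. For the error-dependent bound, since $a_y = \max(a_{\min}, \min(a_{\max}, a_\pi))$ is the projection of $a_\pi$ onto $[a_{\min},a_{\max}]$ and $a_o$ lies in that interval, projection onto an interval is a contraction, so $|a_o - a_y| \le |a_o - a_\pi|$ and symmetrically $|b_o - b_y| \le |b_o - b_\pi|$; combining gives $d_p(o, y) \le d_p(o, \pi) = \eta\cdot\text{OPT}$. For the error-free bound, the extreme agents keep the box small: the agents attaining $a_{\min}$ and $a_{\max}$ each have cost at most $\text{OPT}$, so $a_o - a_{\min} \le \text{OPT}$ and $a_{\max} - a_o \le \text{OPT}$; since $a_y \in [a_{\min}, a_{\max}]$ this yields $|a_o - a_y| \le \text{OPT}$, and likewise $|b_o - b_y| \le \text{OPT}$. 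Combining, $d_p(o, y) \le (\text{OPT}^p + \text{OPT}^p)^{1/p} = 2^{1/p}\,\text{OPT}$. Taking the minimum of the two bounds on $d_p(o,y)$ and substituting into the triangle inequality gives the ratio $1 + \min\{\eta, 2^{1/p}\}$, and $\eta=0$, $\eta\to\infty$ recover $1$-consistency and $(1+2^{1/p})$-robustness.

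The step I expect to require the most care is the robustness bound: the clean verification that $o$ lies in the bounding box (so the interval-contraction argument is legitimate) together with confirming that the $l_p$ combination of the two coordinate-wise bounds $|a_o-a_y|\le\text{OPT}$, $|b_o-b_y|\le\text{OPT}$ produces exactly the factor $2^{1/p}$, recovering the known $\sqrt2$ at $p=2$. The strategyproofness and the error-dependent bound are then essentially immediate from the line result for MinMaxP and the coordinate-wise monotonicity of $d_p$.
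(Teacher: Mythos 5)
Your proof is correct and takes essentially the same route as the paper's: coordinate-wise reduction to MinMaxP for strategyproofness, then the triangle inequality with the two per-coordinate bounds $|a_o-a_y|\le \min(|a_o-a_\pi|,\ \mathrm{OPT})$ and $|b_o-b_y|\le \min(|b_o-b_\pi|,\ \mathrm{OPT})$ recombined through the $l_p$ norm to give $d_p(o,y)\le \min\{\eta,2^{1/p}\}\cdot \mathrm{OPT}$. The only substantive difference is that you explicitly verify the optimal center lies in the bounding box before invoking the projection-contraction step, a fact the paper's inequality $|a_y-a_o|\le \min(|a_\pi-a_o|,r)$ uses implicitly; making it explicit is a small but genuine improvement in rigor.
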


\begin{proof}
For the strategyproofness, notice that the mechanism handles each dimension independently by applying the MinMaxP mechanism to each one. Therefore, since agents prefer the facility to be as close as possible to their own coordinate in every dimension, the strategyproofness of MinMaxP ensures that the Minimum Bounding Box mechanism is also strategyproof.

When $\eta\ge 2^{1/p}$, we prove it to be $(1+2^{\frac{1}{p}})$-approximation. We draw the circle with the smallest radius that contains all agents inside. The radius $r$ is the optimal maximum cost and the center $o=(a_o, b_o)$ is the optimal facility location. Therefore, the maximum cost of this mechanism is
    \begin{align*}
        \text{MC}(\mathbf x,y)&=\max_{i\in N}d(x_i, y)\le \max_{i\in N}d(x_i, o)+d(o, y)\le r+\left(\max_{i\in N}|a_i-a_o|^p+\max_{i\in N}|b_i-b_o|^p\right)^{\frac{1}{p}}\\
        &\le r + \left(2r^p\right)^{\frac{1}{p}}=\left(1+2^{\frac{1}{p}}\right) r,
    \end{align*}
    where the second inequality is because $a_y$ lies in the interval $[a_{\min},a_{\max}]$ and $b_y$ lies in $[b_{\min},b_{\max}]$.

When $\eta<2^{1/p}$, we prove it to be $(1+\eta)$-approximation. Again, we draw the circle with the smallest radius that contains all agents. The radius $r$ is the optimal maximum cost and the center $o=(a_o, b_o)$ is the optimal facility location. By the definition of the prediction error $\eta$, we have $|a_\pi-a_o|^p+|b_\pi-b_o|^p\le (\eta \cdot r)^p$, and
    \begin{align*}
        \text{MC}(\mathbf x,y)&=\max_{i\in N}d(x_i, y)\le \max_{i\in N}d(x_i, o)+d(o, y)\le r + \left(\min(|a_\pi-a_o|, r)^p+\min(|b_\pi-b_o|, r)^p\right)^{\frac{1}{p}}\\
        &\le r + \left(|a_\pi-a_o|^p+|b_\pi-b_o|^p\right)^{\frac{1}{p}}\le (1+\eta)\cdot r.
    \end{align*}
Therefore, the approximation ratio is at most $1+2^{\frac{1}{p}}$ when $\eta\ge 2^{1/p}$, and  $1+\eta$ when $\eta< 2^{1/p}$. The consistency and robustness follow immediately by letting $\eta=0$ and $\eta\to\infty$. 
\end{proof}

The approximation analysis of the mechanism is tight. Consider the following example. In a unit circle $a^p+b^p=1$  there are three agents with $x_1=(-\frac{1}{2^{1/p}}, -\frac{1}{2^{1/p}})$, $x_2=(0, 1)$, $x_3=(1,0)$. The optimal facility location of this instance is $(0, 0)$ and the optimal maximum cost is 1. For any $\eta\ge 0$, let $\pi=\left((\frac{\eta^p}{2})^\frac{1}{p}, (\frac{\eta^p}{2})^\frac{1}{p}\right)$ be the prediction, and the error is equal to $\eta$. When $\eta\le 2^\frac{1}{p}$, the output of the mechanism is $\pi$ and the maximum cost is $1+\eta$ (attained by agent 1), giving an approximation ratio of at least $1+\eta$. When $\eta> 2^\frac{1}{p}$, the output is $(1, 1)$, and the maximum cost is $1+2^{\frac{1}{p}}$ (attained by agent 1), giving an approximation ratio of at least $1+2^{\frac{1}{p}}$.

While it is shown in \cite{agrawal2022learning} that for the $\ell_2^2$ space there is no deterministic  SP mechanism that is $(2-\epsilon)$-consistent and $(1+2^{1/p}-\epsilon)$-robust for any $\epsilon>0$, it is challenging to generalize this impossibility result, and we conjecture it also holds for general $\ell_2^p$ spaces with $p\ge 2$. 

\subsection{Randomized Mechanisms}\label{subsec:ran2}

Before studying randomized mechanisms, we start with a deterministic prediction-free mechanism, the \emph{coordinate-wise median} (CM) mechanism. CM is a classic and most well-studied strategy-proof mechanism in social choice and facility location problems \cite{feigenbaum2017approximately,DBLP:journals/scw/GoelH23}, in which the facility's coordinates are separately computed as medians of agent coordinates in each dimension. In our setting,  given profile $\mathbf{x}$, CM always returns $a_y$ as the median of all $a_i$ ($1\le i\le n$) and $b_y$ as the median of all $b_i$. We first show that it has $2$-approximate the maximum cost, and then propose a novel randomized mechanism with predictions that combines CM and Minimum Bounding Box. 

\begin{proposition}\label{prop:cm}
The CM mechanism is SP and $2$-approximate for the maximum cost. 
\end{proposition}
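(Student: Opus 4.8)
The plan is to prove the two assertions separately: strategyproofness by reduction to the one-dimensional median, and the $2$-approximation by a witness-agent argument that is sharper than the triangle-inequality bound used for Minimum Bounding Box.

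For strategyproofness, I would invoke the classical fact that the one-dimensional median is SP. Since CM computes $a_y$ as the median of the reported first coordinates and $b_y$ as the median of the reported second coordinates, a misreport by agent $i$ of its first coordinate affects only $a_y$ and a misreport of its second coordinate affects only $b_y$. By the SP of the one-dimensional median, agent $i$ cannot strictly decrease $|a_i-a_y|$ nor $|b_i-b_y|$ through any deviation. Because $d_p(x_i,y)=(|a_i-a_y|^p+|b_i-b_y|^p)^{1/p}$ is nondecreasing in each of these two quantities, no misreport can reduce agent $i$'s cost, which gives strategyproofness.

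For the approximation, let $o$ be an optimal location and $r=\text{MC}(\mathbf x,o)$ the optimal maximum cost; note that every pair of agents satisfies $d_p(x_i,x_j)\le d_p(x_i,o)+d_p(o,x_j)\le 2r$. The naive estimate $\text{MC}(\mathbf x,y)\le r+d_p(o,y)\le (1+2^{1/p})r$ only yields $1+2^{1/p}\ge 2$, so I would avoid routing through $o$. Instead, letting $i$ be an agent maximizing $d_p(x_i,y)$ and reflecting each coordinate axis (which preserves both CM and $d_p$) so that $a_i\ge a_y$ and $b_i\ge b_y$, I would produce another agent $k$ lying in the closed opposite quadrant, i.e. $a_k\le a_y\le a_i$ and $b_k\le b_y\le b_i$. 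For such a $k$ one has $|a_i-a_k|\ge|a_i-a_y|$ and $|b_i-b_k|\ge|b_i-b_y|$ coordinatewise, hence $d_p(x_i,y)\le d_p(x_i,x_k)\le 2r$, giving the claim.

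The existence of this witness $k$ is the key step and the main obstacle. I would argue by a counting contradiction: if no agent satisfied both $a_k\le a_y$ and $b_k\le b_y$, then each of the $\ge\lceil n/2\rceil$ agents with $a_j\le a_y$ would have $b_j>b_y$, contradicting that the median $b_y$ leaves at most $\lfloor n/2\rfloor$ agents strictly above it. For odd $n$ the strict inequality $\lceil n/2\rceil>\lfloor n/2\rfloor$ closes the case immediately. For even $n$ the two counts coincide, forcing $\{j:a_j\le a_y\}=\{j:b_j>b_y\}$ exactly; since $i$ has $b_i\ge b_y$ this would place $i$ in $\{a_j\le a_y\}$, contradicting $a_i>a_y$. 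The remaining boundary configurations $a_i=a_y$ or $b_i=b_y$ I would dispatch directly, as then only a single coordinate contributes to $d_p(x_i,y)$ and the one-dimensional median property alone supplies the witness. Handling these parity and degeneracy cases carefully is where the real work lies; once done, the $2$-approximation follows, and it is tight (e.g. three coincident agents together with one agent at Euclidean distance $2r$ from them).
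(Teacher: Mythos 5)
Your proof is correct, but the approximation argument takes a genuinely different route from the paper's. The paper proves a geometric claim: the CM output $y$ can never lie outside the smallest enclosing $l_p$-ball (center $o$, radius $r$) of the agents. It shows this by contradiction — if $y$ were outside, the two median lines $a=a_y$ and $b=b_y$ would confine all agents to two regions on one side of a chord, and an explicit computation (that $o$ lies strictly below the line $P_1Q_2$) shows the ball could be shrunk, contradicting minimality; the $2$-approximation then follows since $y$ and all agents lie in a ball of radius $r$, so every agent--facility distance is at most the diameter $2r$. You instead argue combinatorially: take the farthest agent $i$, place it WLOG in the upper-right closed quadrant of $y$, and use the median counting property to produce a witness agent $k$ in the opposite closed quadrant, whence $d_p(x_i,y)\le d_p(x_i,x_k)\le 2r$ by coordinatewise domination plus the triangle inequality between two agents. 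Your route avoids the geometry of minimal enclosing $l_p$-balls entirely (in particular the paper's delicate chord/semicircle computation, which requires care when $p\ne 2$), works uniformly in $p$, and your parity and degeneracy analysis is sound: the only imprecision is that membership $i\in\{j: b_j>b_y\}$ in the even-$n$ case needs the strict inequality $b_i>b_y$, which is precisely why your separate one-dimensional treatment of the boundary configurations $a_i=a_y$ or $b_i=b_y$ is necessary, and that treatment closes the gap. What the paper's approach buys in exchange is a stronger structural fact of independent interest — the CM point always lies inside the optimal ball — whereas your witness argument is more elementary and self-contained.
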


\begin{proof}
  Firstly we prove it is SP with the same method of Minimum Bounding Box (Mechanism \ref{mec:2d}). If an agent $i$ wants $a_{y}$ to be larger, then it has to misreport from some $a_{i}\le a_{y}$ to some $a_{i}' > a_{y}$, which will increase the cost. In the same way, if the agent wants $a_{y}$ to be smaller, or wants $b_{y}$ to be smaller or larger, then the cost will increase. This means changing output in one dimension cannot decrease the cost. Changing output in both of the two dimensions will also increase the cost because in both dimensions, the distance will increase, and the total distance is calculated by $(|a_{i}'-a_{y}'|^p+|b_{i}'-b_{y}'|)^{\frac{1}{p}}$.

    Then we prove the approximation ratio. We still draw the circle with the smallest radius that contains all agents. The radius $r$ is the optimal maximum cost and the center $o=(a_o, b_o)$ is the optimal output. We claim that $y$ will never be outside the circle. If that claim holds, then since all the agents and output are not outside the circle, the maximum distance between an agent and output is the length of diameter $2r$, leading to 2-approximation. Next we prove this claim by contradiction.

    Suppose some output $y$ satisfies $(a_y-a_o)^p+(b_y-b_o)^p>r^p$. Notice that $a_y, b_y$ are the median of all $a_i$ and $b_i$, we know $a_y\in [a_o-r, a_o+r]$ and $b_y\in [b_o-r, b_o+r]$. Notice that if $a_y=a_o$, then $|b_y-b_o|>r$ which is contradicted with $b_y\in [b_o-r, b_o+r]$. Therefore $a_y\ne a_o$ and $b_y\ne b_o$. Due to symmetry, we can assume $a_y>a_o$ and $b_y>b_o$.

    Suppose the two coordinate axes are $a$ and $b$ axes. Draw line $b=b_y$ intersecting the circle at $P_1, P_2$ ($P_1$ is on the left), and at least half agents satisfy $b_i\ge b_y$. Draw line $a=a_y$ intersecting the circle at $Q_1, Q_2$ ($Q_1$ is at the top), and at least half agents satisfy $a_i\ge a_y$. Notice there is on overlap for the two regions above, so all the agents are in these two regions. Connect $P_1$ and $Q_2$, we can calculate that $o$ is under this line. We prove this in the end of the proof. If that holds, then draw a diameter of a circle through point $o$ that is parallel to $P_1 Q_2$. Of the two semicircles divided by this diameter, the lower semicircle (including its boundary) does not contain any agent. Therefore this is not the smallest circle, leading into a contradiction.

    Last we prove that $o$ is under this line. For convenience, we translate all the points and graphs to let $o=(0,0)$. In this way the equation of the circle is $a_y^p+b_y^p=r^p$. By using equations, we can calculate that $P_1=(-(r^p-b_y^p)^{\frac{1}{p}}, b_y)$, and $Q_2=(a_y, -(r^p-a_y^p)^{\frac{1}{p}})$. Therefore the line equation is
    $$
    b = -\frac{b_y + (r^p-a_y^p)^{\frac{1}{p}}}{a_y + (r^p-b_y^p)^{\frac{1}{p}}}(a - a_y) - (r^p-a_y^p)^{\frac{1}{p}}.
    $$
    When $a=0$, we have
    \begin{align*}
        b &= a_y\cdot \frac{b_y + (r^p-a_y^p)^{\frac{1}{p}}}{a_y + (r^p-b_y^p)^{\frac{1}{p}}} - (r^p-a_y^p)^{\frac{1}{p}}\\
        &= \frac{a_yb_y + a_y(r^p-a_y^p)^{\frac{1}{p}} - a_y(r^p-a_y^p)^{\frac{1}{p}} - (r^p-a_y^p)^{\frac{1}{p}}(r^p-b_y^p)^{\frac{1}{p}}}{a_y + (r^p-b_y^p)^{\frac{1}{p}}}\\
        &= \frac{a_yb_y-(r^p-a_y^p)^{\frac{1}{p}}(r^p-b_y^p)^{\frac{1}{p}}}{a_y + (r^p-b_y^p)^{\frac{1}{p}}}
    \end{align*}
    If we can prove $b > 0$, then $o$ is under this line. Therefore we only need to prove
    \begin{align*}
        a_yb_y > (r^p-a_y^p)^{\frac{1}{p}}(r^p-b_y^p)^{\frac{1}{p}}\Longleftrightarrow \quad & a^pb^p > (r^p-a_y^p)(r^p-b_y^p)\\
        \Longleftrightarrow \quad & r^p\cdot (a_y^p+b_y^p) > r^{2p}\\
        \Longleftrightarrow \quad & a_y^p+b_y^p > r^p.
    \end{align*}
    Therefore we complete the proof.
\end{proof}

\begin{mechanism}
\label{mec:2d+gcm}
     Let $q\in [0, 1]$. Given profile $\mathbf{x}$ and prediction $\pi$, run the Minimum Bounding Box mechanism with probability $1-q$, and CM with probability $q$.
\end{mechanism}

\begin{theorem}
    Mechanism \ref{mec:2d+gcm} is SP and $(1+q + (1-q)\min(2^{\frac{1}{p}}, \eta))$-approximation with respect to prediction error bound $\eta$. In addition, it is $(1+q)$-consistent and $(1+q+2^{\frac{1}{p}}(1-q))$-robust.
\end{theorem}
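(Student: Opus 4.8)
The plan is to mirror the proof of the analogous one-dimensional result (the Proposition for Mechanism \ref{mec:line+line-classic}), exploiting the fact that Mechanism \ref{mec:2d+gcm} is nothing but a probabilistic mixture of two deterministic mechanisms whose individual guarantees have already been established: the Minimum Bounding Box mechanism (Mechanism \ref{mec:2d}) and CM (Proposition \ref{prop:cm}). Throughout, write $y_1=(a_{y_1},b_{y_1})$ for the output of Minimum Bounding Box and $y_2=(a_{y_2},b_{y_2})$ for the output of CM on the same input $(\mathbf x,\pi)$.

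First I would establish strategyproofness. By linearity of expectation, the expected cost of any agent $i$ under Mechanism \ref{mec:2d+gcm} is $c(x_i,f)=(1-q)\,d_p(x_i,y_1)+q\,d_p(x_i,y_2)$. Since both component mechanisms are SP, a unilateral misreport $x_i'$ can only weakly increase each of $d_p(x_i,y_1)$ and $d_p(x_i,y_2)$; because $q,1-q\ge 0$, the convex combination above can only weakly increase as well, which is exactly the SP condition for the mixture.

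Second I would bound the approximation ratio with respect to the error bound $\eta$. Let $o(\mathbf x)$ be an optimal facility and $r=\text{MC}(\mathbf x,o(\mathbf x))$ the optimal maximum cost. The central step is the subadditivity of the maximum applied to the per-agent expected costs:
\begin{align*}
\text{MC}(\mathbf x,f)=\max_{i\in N}\bigl[(1-q)\,d_p(x_i,y_1)+q\,d_p(x_i,y_2)\bigr]\le (1-q)\,\text{MC}(\mathbf x,y_1)+q\,\text{MC}(\mathbf x,y_2).
\end{align*}
Substituting the known guarantees $\text{MC}(\mathbf x,y_1)\le\bigl(1+\min(2^{1/p},\eta)\bigr)r$ for Minimum Bounding Box and $\text{MC}(\mathbf x,y_2)\le 2r$ for CM, then collecting terms, gives the coefficient $(1-q)\bigl(1+\min(2^{1/p},\eta)\bigr)+2q=1+q+(1-q)\min(2^{1/p},\eta)$, which is precisely the claimed bound. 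The consistency $1+q$ and robustness $1+q+2^{1/p}(1-q)$ then follow at once by setting $\eta=0$ and letting $\eta\to\infty$.

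There is no genuinely hard step here: the argument is a routine combination of two pre-established guarantees, exactly as in the line case. The only point that deserves care is the max-of-sums $\le$ sum-of-maxes inequality above, whose validity relies on the randomized maximum cost being defined as the maximum over agents of the \emph{expected} individual cost, so that it decomposes linearly across the two branches of the mixture. Once that observation is in place, every remaining step is a direct substitution, so I expect the write-up to be short.
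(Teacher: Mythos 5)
Your proposal is correct and takes essentially the same approach as the paper's own proof: strategyproofness follows because the mixture of two SP mechanisms is SP, the approximation guarantee is the convex combination $q\cdot 2+(1-q)\bigl(1+\min(2^{1/p},\eta)\bigr)$ of the two component guarantees, and consistency and robustness are read off at $\eta=0$ and $\eta\to\infty$. The only difference is presentational: you explicitly justify the step $\max_i\bigl[(1-q)d_p(x_i,y_1)+q\,d_p(x_i,y_2)\bigr]\le(1-q)\,\text{MC}(\mathbf x,y_1)+q\,\text{MC}(\mathbf x,y_2)$, which the paper treats as immediate.
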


\begin{proof}
    Since Minimum Bounding Box and CM are both SP, we know Mechanism \ref{mec:2d+gcm} is also SP. 
    For the approximation ratio, since Minimum Bounding Box is $(1+\min(2^{\frac{1}{p}},\eta))$-approximation and CM is $2$-approximation, we have the approximation ratio
    \begin{align*}
        \gamma(\eta) &= q \cdot 2 + (1-q)\cdot (1+\min(2^{\frac{1}{p}},\eta))= 1+q + (1-q)\min(2^{\frac{1}{p}}, \eta).
    \end{align*}
The consistency and robustness follow immediately by letting $\eta=0$ and $\eta\to\infty$. 
\end{proof}

We notice that when the prediction error bound is $\eta< 1$,  Mechanism \ref{mec:2d+gcm} has a better approximation ratio than the prediction-free mechanism CM, and when $\eta> 1$, it is better than the Minimum Bounding Box mechanism.

\section{Group Strategyproofness}\label{sec:ext}



A mechanism is \emph{group strategyproof} (GSP) if no group of agents can collude to misreport their preferences in a way that makes every member better off. That is,  a mechanism is GSP if for all $\mathbf x$, all $S\subseteq N$, and all $\mathbf x_S'\in\mathcal M^{|S|}$, there exists $i\in S$ such that $c(x_i, f(\mathbf{x}_S', \mathbf{x}_{-S}, \pi)) \ge c(x_i, f(\mathbf{x}, \pi))$. 
A mechanism is \emph{strong group strategyproof} (SGSP) if no group can collude to misreport their preferences in a way that makes at least one member better off without making any of the remaining members worse off.
It is clear that SGSP implies GSP, and GSP implies SP. All these notions are well studied in the social choice literature \cite{procaccia2013approximate,tang2020characterization}. We examine whether the mechanisms considered are GSP or SGSP and complete the picture by lower bound results. 

\paragraph{The real line.}    Mechanism \ref{mec:line} (MinMaxP) is SGSP. When $x_1\le \pi\le x_n$, the output is $\pi$, and to change the outcome in the new instance $\mathbf x'$ it must be $x_1'>\pi$ or $x_n'<\pi$. In the former case, agent 1 must be a group member but she increases the cost. In the latter case, agent $n$ must be a group member but she also increases the cost. 
When $x_1>\pi$, the output is $x_1$. Agent 1 will never join the group, and the only possibility to change to output is to move it to the left; however, this would increases the cost of all agents. 

Mechanism \ref{mec:line-classic} (LRM) is known to be GSP \cite{procaccia2013approximate}. However, it is not SGSP. Consider the location profile $\mathbf{x}=(0, 1, 2)$. If all the agents report $1$ as their locations, then the output will be 1, which decreases agent 2's cost to 0 and does not change the cost of agents 1 and 3. 
 Mechanism \ref{mec:line+line-classic} is  GSP but not SGSP, because it is a combination of MinMaxP and LRM. 

Next, we present a lower bound 2 on the approximation ratio of all randomized SGSP mechanisms. It indicates that the 2-robust MinMaxP achieves the best possible robustness guarantees among all SGSP mechanisms. 

\begin{lemma}\label{lemma:line-sgsp-1}
    For any randomized prediction-free SGSP mechanism with an approximation ratio at most 2, given location profile $\mathbf{x}=(x_1, x_2, x_3)$ on the line, then either $\mathbb{P}[f(\mathbf{x}) \in [x_1, x_2]]=1$ or $\mathbb{P}[f(\mathbf{x}) \in [x_2, x_3]]=1$.
\end{lemma}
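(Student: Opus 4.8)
The plan is to exploit strong group strategyproofness with the \emph{grand coalition} $S=N$, deviating to a single cleverly chosen common location, and then to read off the conclusion from the equality case of Jensen's inequality. Write $y\sim f(\mathbf x)$ for the random output on the true profile and set $\mu=\mathbb E[y]$; this is well defined and finite because the $2$-approximation forces $\mathbb E[|x_i-y|]\le x_3-x_1$ for every $i$. The first step is to have all three agents misreport the common location $\mu$. Since the $2$-approximation implies unanimity (when all reports coincide at $\mu$ the optimal cost is $0$, so any bounded ratio forces the output to be the point $\mu$ almost surely), the output on the deviated profile is the deterministic point $\mu$.

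The second step is the heart of the argument. By convexity of $t\mapsto |x_i-t|$ and Jensen's inequality, each agent's deviated cost satisfies $|x_i-\mu|\le \mathbb E[|x_i-y|]$; that is, \emph{every} member of the grand coalition is weakly better off after reporting $\mu$. Strong group strategyproofness forbids a deviation in which some member strictly gains while none loses, so no member can strictly gain, i.e. $\mathbb E[|x_i-y|]=|x_i-\mu|$ for all $i\in\{1,2,3\}$. Equality in Jensen for the piecewise-linear convex map $t\mapsto|x_i-t|$ holds precisely when $y$ does not straddle $x_i$, i.e. $\mathbb P[y\le x_i]=1$ or $\mathbb P[y\ge x_i]=1$: indeed, if both $\mathbb P[y<x_i]>0$ and $\mathbb P[y>x_i]>0$, then $\mathbb E[|x_i-y|]=\mathbb E[(x_i-y)^+]+\mathbb E[(y-x_i)^+]>|\mathbb E[(y-x_i)^+]-\mathbb E[(x_i-y)^+]|=|x_i-\mu|$. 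Thus $y$ straddles none of $x_1,x_2,x_3$.

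The final step applies this non-straddling to $x_2$ and closes off the degenerate ends using the approximation bound. Since $y$ does not straddle $x_2$, either $\mathbb P[y\le x_2]=1$ or $\mathbb P[y\ge x_2]=1$; I treat the first case, the other being symmetric. Non-straddling at $x_1$ gives $\mathbb P[y\ge x_1]=1$ or $\mathbb P[y\le x_1]=1$. In the former case $\mathbb P[y\in[x_1,x_2]]=1$, as claimed. In the latter case $\mu\le x_1$, but the $2$-approximation applied to agent $3$ gives $x_3-\mu=\mathbb E[|x_3-y|]\le x_3-x_1$, forcing $\mu=x_1$; combined with $y\le x_1$ almost surely this yields $y=x_1$ almost surely, again inside $[x_1,x_2]$. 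I expect the main conceptual obstacle to be recognizing that the correct common deviation is the \emph{mean} $\mu$ rather than one of the agent locations: this single choice makes all three agents simultaneously weakly better via Jensen, which is exactly what converts the SGSP hypothesis into the family of equalities that forbid straddling. The only loose end—probability mass escaping outside $[x_1,x_3]$—is handled uniformly by the same argument, since any such escape would reduce to a boundary point mass that the $2$-approximation rules out.
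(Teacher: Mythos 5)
Your proof is correct and follows essentially the same route as the paper's: deviate the grand coalition to the mean $\mathbb{E}[f(\mathbf{x})]$, use unanimity (forced by the bounded approximation ratio) and Jensen's inequality to show every agent weakly gains, invoke SGSP to turn these into equalities, and conclude that the output never straddles any $x_i$. The only difference is cosmetic bookkeeping at the end — the paper assumes w.l.o.g.\ that the mean lies in $[x_1,x_2]$ and reads the conclusion off the four non-straddling intervals, while you close the degenerate boundary cases explicitly via the $2$-approximation bound.
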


\begin{proof}
    It is easy to see that the 2-approximation guarantees that $\mathbb{E}[f(\mathbf{x})]\in[x_1,x_3]$. 
    W.l.o.g., we assume $\mathbb{E}[f(\mathbf{x})] = c_0 \in [x_1, x_2]$. Consider another profile $\mathbf{x}'=(c_0, c_0, c_0)$ and obviously $\mathbb P[f(\mathbf{x}')=c_0]=1$. If agents misreport from $\mathbf{x}$ to $\mathbf{x}'$, then no agent will increase the cost because for any $i\in N$, 
    $$
    c(x_i, f(\mathbf{x})) = \mathbb{E}[|x_i- f(\mathbf{x})|]\ge |x_i-\mathbb{E}[f(\mathbf{x})]| = c(x_i, f(\mathbf{x}')).
    $$
    The SGSP property further ensures that no agent could decrease the cost in the new instance, that is,  $\mathbb{E}[|x_i- f(\mathbf{x})|] = |x_i-\mathbb{E}[f(\mathbf{x})]|$ for any $i\in N$. It implies that all realizations of $f(\mathbf{x})$ should be at the same side of $x_i$. More formally, with probability 1,  $f(\mathbf{x})$ should belong to one of the intervals $(-\infty, x_1]$, $[x_1, x_2]$, $[x_2, x_3]$ and $[x_3, \infty)$. Recalling that $\mathbb{E}[f(\mathbf{x})] \in [x_1, x_3]$,  $f(\mathbf{x})$ must belong to one of the intervals $[x_1, x_2]$ and $[x_2, x_3]$ with probability 1.
\end{proof}

\begin{proposition}\label{prop:ext}
    No randomized prediction-free SGSP mechanism on a line can be better than 2-approximation. It implies that no randomized SGSP mechanism with predictions can be better than 2-robust.
\end{proposition}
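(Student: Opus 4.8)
The plan is to argue by contradiction: assume a randomized prediction-free SGSP mechanism $f$ with approximation ratio $2-\delta$ for some $\delta>0$, and exhibit a profile on which it cannot do this well. The engine is Lemma~\ref{lemma:line-sgsp-1} together with the indifference identity behind it: the proof of that lemma shows that for \emph{every} $3$-agent profile the realizations of $f$ lie in a single gap between consecutive reported points, so that $\mathbb{E}[|x_i-f(\mathbf x)|]=|x_i-\mu(\mathbf x)|$ for each reported $x_i$, where $\mu(\mathbf x):=\mathbb{E}[f(\mathbf x)]$. In particular, on any profile the facility lies entirely to one side of the \emph{median}.

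First I would isolate a clean one-parameter family. Consider $\mathbf x^{(m)}=(-1,m,1)$ for $m\in(-1,1)$, whose optimum is the midpoint $0$ with optimal cost $1$. For the two fixed extreme agents the realized maximum cost at any facility $y\in[-1,1]$ equals $\max(y+1,\,1-y)=1+|y|$, and the median agent at $m$ never dominates; hence $\mathrm{MC}(\mathbf x^{(m)},f)=1+\mathbb{E}[|f(\mathbf x^{(m)})|]$ and the ratio on $\mathbf x^{(m)}$ is exactly $1+\mathbb{E}[|f(\mathbf x^{(m)})|]$. Thus $(2-\delta)$-approximation forces
$$\mathbb{E}\big[\,|f(\mathbf x^{(m)})|\,\big]\le 1-\delta\qquad\text{for all }m\in(-1,1),$$
while Lemma~\ref{lemma:line-sgsp-1} says the support lies in $[-1,m]$ or in $[m,1]$. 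These two facts already pin the behaviour at the ends: if $m$ is within $\delta$ of $1$, then the right option would give $\mathbb{E}[|f|]\ge m>1-\delta$, a contradiction, so the support must sit in $[-1,m]$; symmetrically it must sit in $[m,1]$ for $m$ near $-1$. Hence, as $m$ decreases from near $1$ to near $-1$, the mechanism must switch sides of the median.

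To close, I would exploit the \emph{full} strength of SGSP at this switch. At a profile where the facility sits on the short side of the median, the two agents on that side both strictly prefer the facility to move across the median; applying the no-one-worse-off clause of SGSP to this coalition, together with the quantitative bound above and the one-sided identity, should force the ratio up to $2$ in the limit as $m$ approaches the degenerate endpoint. An appealing alternative route is to pass to the deterministic \emph{mean} mechanism $\bar f(\mathbf x):=\mu(\mathbf x)$: by the one-sided identity it gives every present agent exactly the same cost as $f$ and, by Jensen, never a larger maximum cost, so it is also $(2-\delta)$-approximate; if one can show $\bar f$ is strategyproof, the classical deterministic lower bound of $2$ for the maximum cost \cite{procaccia2013approximate} yields the contradiction immediately.

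The main obstacle is precisely this closing step. Single-agent (SP) deviations are not enough: the slack $\delta$ always lets the randomized facility ``re-center'' on whichever gap contains the optimal midpoint, so any argument perturbing only one agent is defeated. The contradiction must therefore genuinely use coalitions and the weak-Pareto flavour of SGSP that produced the indifference identity, and must control the \emph{distribution} of the facility (not just its mean) across the side-switch---most cleanly through a limiting argument as the profile degenerates. In the reduction route, the same difficulty reappears as the need to upgrade strategyproofness of $f$ to strategyproofness of $\bar f$, i.e.\ to supply the reverse-Jensen direction that SP of $f$ does not by itself provide. Finally, the second sentence is immediate: fixing a worst-case prediction turns any SGSP mechanism with predictions into a prediction-free SGSP mechanism, so its robustness is at least the prediction-free bound just established, namely $2$.
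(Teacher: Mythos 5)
There is a genuine gap, and you name it yourself: the closing step is missing. Both of your routes end in an admission rather than an argument --- the coalition-at-the-switch idea is left as ``should force the ratio up to 2,'' and the mean-mechanism reduction founders exactly where you say it does (SP of $f$ gives Jensen in the wrong direction for the \emph{true} location of a deviating agent, so SP of $\bar f$ does not follow). Moreover, your choice of the family $(-1,m,1)$ cannot work even in principle: with the two extreme agents pinned at $\pm 1$ and optimal cost $1$, the ratio of a better-than-$2$ mechanism stays bounded away from $2$ on every such profile, and letting the \emph{middle} agent drift to an endpoint only merges two reports. The ratio can only be driven to $2$ by stretching an \emph{extreme} agent off to infinity, which your family forbids.

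The paper's closing idea is the one you dismissed as impossible. Starting from $\mathbf{x}^{(0)}=(0,1,2)$, where Lemma~\ref{lemma:line-sgsp-1} lets one assume $\mathbb{P}[f(\mathbf{x}^{(0)})\in[0,1]]=1$, it moves agent $3$ rightward in tiny increments $x_3^{(k)}=2+0.01k$ and proves by induction that the support stays in $[x_1,x_2]=[0,1]$ with unchanged mean. Your objection that ``the slack $\delta$ always lets the randomized facility re-center on whichever gap contains the optimal midpoint'' is exactly what the small step size kills: the lemma (applied to the new profile $\mathbf{x}^{(k+1)}$, which is legitimate since the mechanism is assumed $2$-approximate) says the new support lies wholly in $[x_1,x_2]$ or wholly in $[x_2,x_3^{(k+1)}]$; if it were in the right gap, then SP of agent $3$ at her \emph{old} position forces
\begin{align*}
\mathbb{E}\bigl[d(x_3^{(k)}, f(\mathbf{x}^{(k+1)}))\bigr]\;\ge\; \mathbb{E}\bigl[d(x_3^{(k)}, f(\mathbf{x}^{(k)}))\bigr]\;\ge\; d(x_2,x_3^{(k)}),
\end{align*}
while every point of $[x_2,x_3^{(k+1)}]$ is within $d(x_2,x_3^{(k)})$ of $x_3^{(k)}$ (because the step $0.01$ is tiny relative to the gap), so equality holds throughout and the support collapses to $\{x_2\}$ --- it can never drift right. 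In the remaining case (support in $[x_1,x_2]$), the one-sided identity you already derived plus SP applied in \emph{both} directions (misreport $x_3^{(k)}\to x_3^{(k+1)}$ and $x_3^{(k+1)}\to x_3^{(k)}$) pins the mean exactly. Letting $k\to\infty$ gives maximum cost at least $x_3^{(k)}-1$ against an optimum of $x_3^{(k)}/2$, so the ratio tends to $2$, a contradiction. Thus SGSP is needed only once --- through the lemma, to confine the support to a single gap at \emph{every} profile --- and the contradiction is then extracted with ordinary single-agent deviations; your diagnosis that coalitions must carry the closing step is backwards. Your final sentence (worst-case prediction reduces the prediction setting to the prediction-free bound) is correct and matches the paper.
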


\begin{proof}
    Suppose $f$ is such a mechanism. Consider $\mathbf{x}=(0,1,2)$. By Lemma \ref{lemma:line-sgsp-1}, assume w.l.o.g. that $\mathbb{P}[f(\mathbf{x}) \in [x_1, x_2]]=1$. Then we move $x_3$ towards positive direction with 0.01 in each step. Denote by $x_3^{(k)}=2+0.01k$ the location of agent 3 after the $k$-th step. Denote by $\mathbf{x}^{(k)}$ the location profile. 

    We prove by induction that, after each step $k$,  $f(\mathbf{x}^{(k)})\in[x_1, x_2]$ with probability 1 and $\mathbb{E}[f(\mathbf{x}^{(k)})] = \mathbb{E}[f(\mathbf{x}^{(0)})]$. It is clearly true for $k=0$. 
    For $k=1$,     to prevent agent 3 misreporting from $x_3^{(0)}$ to $x_3^{(1)}$, the strategyproofness ensures that $\mathbb{E}[d(x_3^{(0)}, f(\mathbf{x}^{(1)}))]\ge \mathbb E[d(x_3^{(0)}, f(\mathbf{x}^{(0)}))]$. If $f(\mathbf{x}^{(1)})$ is in $[x_2, x_3^{(1)}]$, we have
    \begin{align*}
        d(x_2, x_3^{(0)}) &= \max_{y\in [x_2, x_3^{(1)}]} d(y, x_3^{(0)}) \ge \mathbb{E}[d(x_3^{(0)}, f(\mathbf{x}_3^{(1)}))]\ge \mathbb E[d(x_3^{(0)}, f(\mathbf{x}_3^{(0)}))]\ge \min_{y\in [x_1, x_2]}d(y, x_3^{(0)})\\
        &= d(x_2, x_3^{(0)}).
    \end{align*}
    This indicates that all the inequalities should be equalities, and that  $f(\mathbf{x}^{(1)})$ and $f(\mathbf{x}^{(0)})$ are both at $x_2$.
    Similarly,   if $f(\mathbf{x}^{(1)})$ is in $[x_1, x_2]$, we have
    \begin{align*}
        d(\mathbb{E}[f(\mathbf{x}^{(1)})], x_3^{(0)}) &= \mathbb{E}[d(x_3^{(0)}, f(\mathbf{x}^{(1)}))]\ge \mathbb E[d(x_3^{(0)}, f(\mathbf{x}^{(0)}))]= d(\mathbb{E}[f(\mathbf{x}^{(0)})], x_3^{(0)}).
    \end{align*}
    This indicates that $d(\mathbb{E}[f(\mathbf{x}^{(1)})], x_3^{(0)})=d(\mathbb{E}[f(\mathbf{x}^{(0)})], x_3^{(0)})$. Therefore, combining these two cases, we have  $\mathbb P[f(\mathbf{x}^{(1)})\in[x_1, x_2]]=1$ and $\mathbb{E}[f(\mathbf{x}^{(1)})] = \mathbb{E}[f(\mathbf{x}^{(0)})]$. Applying the same arguments for the $k$-th step, we complete the induction proof.
    
    When $k\rightarrow \infty$, $x_3^{(k)}\rightarrow \infty$, and the approximation ratio of the mechanis approaches 2.
\end{proof}

\paragraph{The tree.} Mechanism~\ref{mec:tree} (Tree MinMaxP) is SGSP. 
If the output moves toward some direction, then all the agents located on the opposite side, as well as the output itself, would need to misreport their locations toward that direction. 
However, doing so would only increase the cost for all of them.
In contrast, the Tree Random Mechanism of \cite{alon2010strategyproof} is not GSP. 
Consider a star with four vertices $O, A, B, C$ and three unit-length edges $OA, OB, OC$. 
There are three agents located at $A$, $B$, and $C$, respectively. 
Under this mechanism, the costs for agents $A$, $B$, and $C$ are all equal to
$\frac{2}{5}\cdot 2 + \frac{2}{5}\cdot 1 = \frac{6}{5}$.
If all agents misreport their locations to $O$, the output would be $O$ deterministically and their cost decreases to $1$. 
It follows that Mechanism~\ref{mec:tree+tree-ran} is also not GSP.

\paragraph{$\ell_2^p$ metric spaces.}   When $p=1$, running the mechanisms on the line for each dimension maintains the properties. When $p\ge 2$, all mechanisms considered in Section \ref{sec:2d}
are not GSP. Indeed, all deterministic GSP mechanisms must be dictatorial (that always return the location of a dictator agent), and all randomized GSP mechanisms must be 2-dictatorial, as explained in \cite{tang2020characterization}. It implies that no (randomized) prediction-free GSP  mechanism has an approximation ratio better than 2. Since this characterization also works for the setting with predictions, with  any prediction error $\eta\ge 0$, no (randomized) GSP  mechanism with prediction has approximation ratio less than 2.

\section{Conclusion}


In this work, we studied facility location mechanisms within the learning-augmented framework on the real line, on trees, and in two-dimensional \(\ell_2^p\) metric spaces. Focusing on SP mechanisms that approximately minimize the maximum cost, we parameterized their approximation guarantees in terms of the prediction error and quantified their consistency and robustness. Our results show that learning-augmented facility location mechanisms can be effectively extended beyond one-dimensional domains.

However, the landscape for \(\ell_2^p\) spaces remains only partially understood. For instance, while it is known that when \(p = 2\) no deterministic SP mechanism can be simultaneously \((2-\epsilon)\)-consistent and \((1+2^{1/p}-\epsilon)\)-robust \cite{agrawal2022learning}, it is still unclear whether this limitation extends to general \(\ell_2^p\) spaces. It would also be interesting to investigate learning-augmented mechanisms in other metric spaces, such as graphs or higher-dimensional Euclidean (or \(l^p\)) spaces.

Furthermore, the problem of multi-facility location remains unsolved. 
The conference version of this paper~\cite{ecai2025} discusses this issue and provides a lower bound: no deterministic SGSP mechanism can achieve $(2-\epsilon)$-consistency (for any $\epsilon>0$) while maintaining bounded robustness. 
For SP/GSP mechanisms, the same result holds in the following two deterministic cases: $2$-facility with anonymous mechanisms, and $m$-facility with $m>2$. 
The former is a corollary of~\cite{fotakis2013strategyproof}, where it is proved that for $2$-facility locations without predictions, any deterministic anonymous SP mechanism with a bounded approximation ratio must be the ``two-extreme'' mechanism, which always selects the leftmost and rightmost agents. 
Therefore, if bounded robustness is desired, the two-extreme mechanism is the only possibility, and the prediction can never improve; if one aims to improve consistency, the robustness becomes unbounded. 
The latter case stems from the fact that deterministic strategyproof mechanisms do not admit any bounded approximation ratio for social cost or maximum cost.

Beyond the theoretical settings considered, several important practical dimensions remain open for future work. As noted in the introduction, predictions in real systems are often derived from historical data, which may introduce dynamics not captured by our static model. For instance, if agents are aware that their reports can influence future predictions, they may adopt time-dependent strategic behaviors. Incorporating such dynamic strategic considerations, alongside other practical constraints like partial participation, noisy location reports, and budget limitations, would bring the model closer to real-world deployment scenarios. Investigating the design and performance of learning-augmented mechanisms under these more complex yet realistic assumptions is a compelling direction for further research.

\paragraph{Acknowledgements.} Hau Chan is supported by the National Institute of General Medical Sciences of the National Institutes of Health [P20GM130461], the Rural Drug Addiction Research Center at the University of Nebraska-Lincoln, and the National Science Foundation under grants IIS:RI \#2302999 and IIS:RI \#2414554. The work is supported in part by the Guangdong Provincial/Zhuhai Key Laboratory of IRADS (2022B1212010006) and by Artificial Intelligence and Data Science Research Hub, BNBU, No. 2020KSYS007. Chenhao Wang is supported by NSFC under grant 12201049, and by BNBU under grant UICR0400004-24B. The content is solely the responsibility of the authors and does not necessarily represent the official views of the funding agencies.

\bibliographystyle{plain}
\bibliography{mybibfile}

\begin{thebibliography}{10}

\bibitem{agrawal2022learning}
Priyank Agrawal, Eric Balkanski, Vasilis Gkatzelis, Tingting Ou, and Xizhi Tan.
\newblock Learning-augmented mechanism design: Leveraging predictions for facility location.
\newblock {\em Math. Oper. Res.}, 2024.

\bibitem{alon2010strategyproof}
Noga Alon, Michal Feldman, Ariel~D Procaccia, and Moshe Tennenholtz.
\newblock Strategyproof approximation of the minimax on networks.
\newblock {\em Mathematics of Operations Research}, 35(3):513--526, 2010.

\bibitem{nips2025envy}
Haris Aziz, Yuhang Guo, Alexander Lam, and Houyu Zhou.
\newblock Learning-augmented facility location mechanisms for the envy ratio objective.
\newblock In {\em Advances in Neural Information Processing Systems: Annual Conference on Neural Information Processing Systems (NeurIPS)}, 2025.

\bibitem{balkanski2024randomized}
Eric Balkanski, Vasilis Gkatzelis, and Golnoosh Shahkarami.
\newblock Randomized strategic facility location with predictions.
\newblock {\em Advances in Neural Information Processing Systems}, 37:35639--35664, 2024.

\bibitem{DBLP:conf/nips/Barak0T24}
Zohar Barak, Anupam Gupta, and Inbal Talgam{-}Cohen.
\newblock {MAC} advice for facility location mechanism design.
\newblock In {\em Advances in Neural Information Processing Systems 38: Annual Conference on Neural Information Processing Systems (NeurIPS)}, 2024.

\bibitem{border1983}
Kim~C. Border and J.~S. Jordan.
\newblock Straightforward elections, unanimity and phantom voters.
\newblock {\em The Review of Economic Studies}, 50(1):153--170, 1983.

\bibitem{chan2021mechanismsurvey}
Hau Chan, Aris Filos{-}Ratsikas, Bo~Li, Minming Li, and Chenhao Wang.
\newblock Mechanism design for facility location problems: {A} survey.
\newblock In {\em Proceedings of the 13th International Joint Conference on Artificial Intelligence (IJCAI)}, pages 4356--4365, 2021.

\bibitem{ecai2025}
Hau Chan, Jianan Lin, and Chenhao Wang.
\newblock Strategyproof mechanisms for facility location with prediction under the maximum cost objective.
\newblock In {\em Proceedings of the 28th European Conference on Artificial Intelligence (ECAI)}, volume 413, pages 1318 -- 1325, 2025.

\bibitem{DBLP:journals/corr/abs-2410-07497}
Qingyun Chen, Nick Gravin, and Sungjin Im.
\newblock Strategic facility location via predictions.
\newblock In {\em Proceedings of the 20th International Conference of Web and Internet Economics (WINE)}, 2024.

\bibitem{chen2018mechanism}
Xujin Chen, Xiaodong Hu, Xiaohua Jia, Minming Li, Zhongzheng Tang, and Chenhao Wang.
\newblock Mechanism design for two-opposite-facility location games with penalties on distance.
\newblock In {\em Proceedings of the 11th International Symposium on Algorithmic Game Theory (SAGT)}, pages 256--260, 2018.

\bibitem{chen2021tight}
Xujin Chen, Xiaodong Hu, Zhongzheng Tang, and Chenhao Wang.
\newblock Tight efficiency lower bounds for strategy-proof mechanisms in two-opposite-facility location game.
\newblock {\em Information Processing Letters}, 168:106098, 2021.

\bibitem{cheng2013strategy}
Yukun Cheng, Wei Yu, and Guochuan Zhang.
\newblock Strategy-proof approximation mechanisms for an obnoxious facility game on networks.
\newblock {\em Theoretical Computer Science}, 497:154--163, 2013.

\bibitem{DBLP:conf/nips/000SV24}
George Christodoulou, Alkmini Sgouritsa, and Ioannis Vlachos.
\newblock Mechanism design augmented with output advice.
\newblock In {\em Advances in Neural Information Processing Systems: Annual Conference on Neural Information Processing Systems (NeurIPS)}, 2024.

\bibitem{DBLP:conf/tamc/FangFLN24}
Jiazhu Fang, Qizhi Fang, Wenjing Liu, Qingqin Nong, and Alexandros~A Voudouris.
\newblock Mechanism design with predictions for facility location games with candidate locations.
\newblock {\em Journal of Combinatorial Optimization}, 49(5):74, 2025.

\bibitem{feigenbaum2017approximately}
Itai Feigenbaum, Jay Sethuraman, and Chun Ye.
\newblock Approximately optimal mechanisms for strategyproof facility location: Minimizing lp norm of costs.
\newblock {\em Mathematics of Operations Research}, 42(2):434--447, 2017.

\bibitem{feldman2013strategyproof}
Michal Feldman and Yoav Wilf.
\newblock Strategyproof facility location and the least squares objective.
\newblock In {\em Proceedings of the 14th ACM Conference on Electronic Commerce (EC)}, pages 873--890, 2013.

\bibitem{fong2018facility}
Chi Kit~Ken Fong, Minming Li, Pinyan Lu, Taiki Todo, and Makoto Yokoo.
\newblock Facility location games with fractional preferences.
\newblock In {\em Proceedings of the AAAI Conference on Artificial Intelligence (AAAI)}, pages 1039--1046, 2018.

\bibitem{fotakis2013strategyproof}
Dimitris Fotakis and Christos Tzamos.
\newblock Strategyproof facility location for concave cost functions.
\newblock {\em Algorithmica}, 76(1):143--167, 2016.

\bibitem{Frank2007}
Christian Frank.
\newblock Facility location.
\newblock In {\em Algorithms for Sensor and Ad Hoc Networks: Advanced Lectures}, 2007.

\bibitem{DBLP:journals/scw/GoelH23}
Sumit Goel and Wade Hann{-}Caruthers.
\newblock Optimality of the coordinate-wise median mechanism for strategyproof facility location in two dimensions.
\newblock {\em Social Choice and Welfare}, 61(1):11--34, 2023.

\bibitem{gravin2025approximation}
Nikolai Gravin and Jianhao Jia.
\newblock Approximation guarantees of median mechanism in $\mathbb{R}^d$.
\newblock In {\em Proceedings of the 57th Annual ACM Symposium on Theory of Computing}, pages 495--506, 2025.

\bibitem{hochbaum1982heuristics}
Dorit~S Hochbaum.
\newblock Heuristics for the fixed cost median problem.
\newblock {\em Mathematical programming}, 22(1):148--162, 1982.

\bibitem{Ibara2012characterize}
Ken Ibara and Hiroshi Nagamochi.
\newblock Characterizing mechanisms in obnoxious facility game.
\newblock In {\em International Conference on Combinatorial Optimization and Applications}, pages 301--311. Springer, 2012.

\bibitem{DBLP:journals/corr/abs-2212-09521}
Gabriel Istrate and Cosmin Bonchis.
\newblock Mechanism design with predictions for obnoxious facility location.
\newblock {\em CoRR}, abs/2212.09521, 2022.

\bibitem{li2017mechanism}
Bin Li, Dong Hao, Dengji Zhao, and Tao Zhou.
\newblock Mechanism design in social networks.
\newblock In {\em Proceedings of the AAAI Conference on Artificial Intelligence}, volume~31, 2017.

\bibitem{lin2020nearly}
Jianan Lin.
\newblock Nearly complete characterization of 2-agent deterministic strategyproof mechanisms for single facility location in l p space.
\newblock In {\em International Conference on Combinatorial Optimization and Applications}, pages 411--425. Springer, 2020.

\bibitem{lykouris2021competitive}
Thodoris Lykouris and Sergei Vassilvitskii.
\newblock Competitive caching with machine learned advice.
\newblock {\em Journal of the ACM (JACM)}, 68(4):1--25, 2021.

\bibitem{DBLP:conf/sagt/Meir19}
Reshef Meir.
\newblock Strategyproof facility location for three agents on a circle.
\newblock In {\em Algorithmic Game Theory - 12th International Symposium (SAGT)}, 2019.

\bibitem{mitzenmacher2022algorithms}
Michael Mitzenmacher and Sergei Vassilvitskii.
\newblock Algorithms with predictions.
\newblock {\em Communications of the ACM}, 65(7):33--35, 2022.

\bibitem{DBLP:journals/scw/Miyagawa01}
Eiichi Miyagawa.
\newblock Locating libraries on a street.
\newblock {\em Social Choice and Welfare}, 18(3):527--541, 2001.

\bibitem{moulin1980strategy}
Herv{\'e} Moulin.
\newblock On strategy-proofness and single peakedness.
\newblock {\em Public Choice}, 35(4):437--455, 1980.

\bibitem{PaoloCarmine2016}
Serafino Paolo and Ventre Carmine.
\newblock Heterogeneous facility location without money.
\newblock {\em Theoretial Compututer Science}, 636:27--46, 2016.

\bibitem{procaccia2013approximate}
Ariel~D Procaccia and Moshe Tennenholtz.
\newblock Approximate mechanism design without money.
\newblock {\em ACM Transactions on Economics and Computation (TEAC)}, 1(4):1--26, 2013.

\bibitem{roughgarden2021beyond}
Tim Roughgarden.
\newblock {\em Beyond the worst-case analysis of algorithms}.
\newblock Cambridge University Press, 2021.

\bibitem{tang2020characterization}
Pingzhong Tang, Dingli Yu, and Shengyu Zhao.
\newblock Characterization of group-strategyproof mechanisms for facility location in strictly convex space.
\newblock In {\em Proceedings of the 21st ACM Conference on Economics and Computation}, pages 133--157, 2020.

\bibitem{DBLP:conf/ijcai/XuL22}
Chenyang Xu and Pinyan Lu.
\newblock Mechanism design with predictions.
\newblock In {\em Proceedings of the 31st International Joint Conference on Artificial Intelligence (IJCAI)}, pages 571--577, 2022.

\bibitem{ye2015strategy}
Deshi Ye, Lili Mei, and Yong Zhang.
\newblock Strategy-proof mechanism for obnoxious facility location on a line.
\newblock In {\em International computing and combinatorics conference}, pages 45--56. Springer, 2015.

\end{thebibliography}

\end{document}